\documentclass[10pt,aps,prx,twocolumn,superscriptaddress,nofootinbib,longbibliography]{revtex4-2}

\usepackage{amsmath}
\usepackage{amssymb}
\usepackage{bm}
\usepackage{amsfonts}
\usepackage{graphicx}
\usepackage{xcolor}
\usepackage[normalem]{ulem}
\usepackage{hyperref}
\hypersetup{
    colorlinks=true,
    linkcolor=blue,
    citecolor=blue,
    urlcolor=blue
}

\newcommand{\norm}[1]{\left\lVert#1\right\rVert}

\newcommand{\ket}[1]{|#1\rangle}
\newcommand{\bra}[1]{\langle#1|}

\newcommand{\Tr}{\operatorname{Tr}}

\newtheorem{definition}{Definition}
\newtheorem{theorem}{Theorem}

\newtheorem{lemma}{Lemma}

\begin{document}

\title{Quantum magic and non-commutativity as computational resources in quantum reservoir computing}

\author{Wei Xia}
\thanks{These authors contributed equally to this work}
\affiliation{Department of Physics and The Hong Kong Institute of Quantum Information Science and Technology, The Chinese University of Hong Kong, Shatin, New Territories, Hong Kong, China}

\author{Shuaifan Cao}
\thanks{These authors contributed equally to this work}
\affiliation{State Key Laboratory of Surface Physics, Institute of Nanoelectronics and Quantum Computing, and Department of Physics, Fudan University, Shanghai 200433, China}
\affiliation{Shanghai MicMac Quantum Technology Co., Ltd} 

\author{Xingze Qiu}
\email{xingze@tongji.edu.cn}
\affiliation{School of Physics Science and Engineering, Tongji University, Shanghai 200092, China}

\author{Xiaopeng Li}
\email{xiaopeng\underline{ }li@fudan.edu.cn}
\affiliation{State Key Laboratory of Surface Physics, Institute of Nanoelectronics and Quantum Computing, and Department of Physics, Fudan University, Shanghai 200433, China}
\affiliation{Shanghai Qi Zhi Institute, AI Tower, Xuhui District, Shanghai 200232, China}
\affiliation{Shanghai Buchou Quantum Technology Co., Ltd}
\affiliation{Hefei National Laboratory, Hefei 230088, China}

\date{\today}

\begin{abstract}
Quantum reservoir computing (QRC) provides a hardware-efficient paradigm for temporal information processing on near-term quantum devices. Despite rapid experimental progress, a rigorous understanding of the structural conditions required for its scalable quantum-enhanced performance remains lacking. Here, we develop a theoretical framework in Pauli-Liouville space that provides a unified analytical treatment of the echo state property (ESP), nonlinear expressive power, and quantum resources. 
We first analyze the widely used qubit-resetting scheme and establish that quantum magic generated by reservoir dynamics is a necessary condition for effective computation—a requirement more fundamental than ESP. However, we prove that this architecture faces inherent expressivity limitations: all nonlinear processing originates exclusively from the classical encoding map, imposing an unavoidable trade-off between nonlinearity and memory capacity.
To circumvent this structural bottleneck, we rigorously analyze Hamiltonian encoding, in which temporal inputs are embedded directly into the continuous dynamics generator. 
We show that the ESP is natively guaranteed by the Liouvillian spectral gap, decoupling it from quantum magic. 
Crucially, for any non-trivial drive Hamiltonian, the discrete-time update map exhibits a transcendental, infinite-order nonlinear dependence on the instantaneous input. Moreover, the intrinsic non-commutativity of the open-system generators governs the temporal coupling of these nonlinearities, producing highly non-separable processing of the input history. Our results establish a rigorous theoretical hierarchy of QRC architectures and provide prescriptive design principles for experiments targeting genuine quantum advantages in temporal processing.

\end{abstract}

\maketitle

\section{Introduction}
\label{sec:intro}

\subsection{Quantum Reservoir Computing for Temporal Information Processing}
\label{subsec:qrc_nisq}

Quantum machine learning seeks to exploit the structure of quantum state spaces—high dimensionality, entanglement, and interference—to gain advantages in computational or statistical learning tasks~\cite{biamonte2017quantum,schuld2019quantum,havlicek2019supervised}. Establishing whether and when such advantages are genuine requires precise characterization of the function classes accessible to quantum models relative to their classical counterparts~\cite{servedio2004equivalences,dunjko2016quantum,liu2021rigorous,huang2022quantum}.

Quantum reservoir computing extends this program to temporal information processing, exploiting quantum dynamics as computational resources~\cite{Fujii2017Harnessing}.
In analogy with classical reservoir computing~\cite{Herbert2004Harnessing,Maass2002Real}, QRC processes temporal inputs by sequentially encoding them into a fixed quantum dynamical system, whose evolving observables provide a high-dimensional nonlinear representation of the input history. 
The resulting signals are then combined through a trainable linear readout to perform the target task. 

This architecture is compatible with both analogue quantum devices and gate-based digital quantum circuits, making it an experimentally versatile route toward quantum-enhanced temporal learning. Implementations have been explored in superconducting processors~\cite{Senanian2024Microwave,Dudas2023,Ricci2026Quantum,Suzuki2022Natural}, nuclear magnetic resonance platforms~\cite{Hou2026High,negoro2018machine}, photonic integrated circuits~\cite{Paparelle2026Experimental,DiBartolo2026Time}, and Rydberg atom arrays~\cite{kornjaca2024large}. Its computational capabilities have been demonstrated in tasks including dynamical-system modeling, time-series prediction, and signal processing~\cite{Fujii2017Harnessing,Nakajima2019Boosting,Nokkala2021Gaussian,MartinezPena2021Dynamical,Mujal2023Measurements,Sannia2024Dissipation,Kobayashi2024Feedback}, motivating the search for scalable QRC schemes whose advantages over classical reservoir computing can be rigorously characterized~\cite{Xia2023Configured,Hou2026High}.

\subsection{The Theoretical Gap: Stability, Magic, and Expressivity}
\label{subsec:theoretical_gap}

Despite the rapid progress, the theoretical foundations of QRC remain much underdeveloped. 
Previous theoretical works have largely been confined to employing incomplete mathematical models to explain specific empirical features~\cite{mujal2021analytical, schutte2025expressivity}, or establishing the conditions under which QRC satisfies the echo state property (ESP)~\cite{pena2023quantum, kobayashi2024extending, kobayashi2024coherence,pena2025input}. 
The ESP, which ensures that the reservoir state is asymptotically stable against variations in the initial quantum state, is established as a crucial prerequisite in reservoir computing~\cite{jaeger_2001_esp, Izzet2012revisiting}.
However, the question of which uniquely quantum effects are fundamentally responsible for computational capabilities beyond those of classical reservoir models remains largely open—a gap that must be closed for QRC to provide a scalable, demonstrable advantage.

More specifically, early theoretical investigations attributed the expressive power of QRC to the exponential scaling of the many-body Hilbert space~\cite{Fujii2017Harnessing}, alongside the dynamical generation of quantum coherence~\cite{Palacios2024Role,kora2026quantumness,Xia2023Configured} and multipartite entanglement~\cite{Danilo2025Quantum,askari2025spin,karimi2026role,Kora2024Frequency}. 
However, the Gottesman–Knill theorem shows that even states exhibiting substantial quantum entanglement and coherence can be efficiently simulated classically under Clifford dynamics.~\cite{Nielsen_Chuang_2010}.
This theorem implies that entanglement and coherence, while necessary, are strictly insufficient to guarantee a classically intractable advantage in temporal processing. Consequently, a growing theoretical consensus points toward quantum magic---the rigorous measure of non-stabilizerness---as the ultimate indispensable resource~\cite{Veitch2014resource,Bravyi2005Universal,Aaronson2004Improved}. 
Yet, how magic affects ESP and how it contributes mechanistically to the QRC performance remain unknown. 
Furthermore, since data encoding is central to quantum information processing, identifying which input-encoding paradigm provides a deeper computational advantage is equally pressing. In particular, the fundamental distinction in learning capabilities between the discrete qubit-resetting scheme~\cite{Fujii2017Harnessing} and continuous Hamiltonian modulation~\cite{Bravo2022Quantum,Sannia2024Dissipation} must be understood to establish rigorous design principles for QRC.

\subsection{Summary of Main Results}
\label{subsec:main_results}

In this work, we develop a rigorous mathematical framework operating in the Pauli-Liouville space and analyze the fundamental capabilities of QRC. Our analysis reveals  a sharp distinction between the two standard input-encoding schemes:  qubit resetting and Hamiltonian encoding. 
For qubit resetting, we show QRC suffers from fundamental limitations in nonlinear information processing that can only be partially resolved by introducing quantum magic into the reservoir dynamics.
For Hamiltonian encoding, by contrast, such nonlinear capability emerges naturally from the intrinsic non-commutativity of the physical interactions.
Through both rigorous analysis and numerical verification, we establish quantum magic and non-commutativity as crucial computational resources in QRC, 
thereby delineating the conditions under which QRC can achieve a scalable advantage over classical reservoir computing. Our main results are as follows:

\textit{1. Dynamics-generated magic as the essential prerequisite for QRC in the qubit-resetting architecture.---}
ESP is widely accepted as a vital prerequisite for reliable classical reservoir computing. 
We rigorously show that it is insufficient in the quantum setting: random Clifford dynamics can satisfy the ESP yet suffer catastrophic information loss, rendering the reservoir computationally inert. Instead, it is the dynamics-generated quantum magic—the non-stabilizer resources generated in the quantum reservoir dynamics—rather than  ESP that serves as the true physical prerequisite for sustaining a rich input-history representation and enabling effective computational performance.

\textit{2. The expressivity bottleneck of qubit resetting.---}
We prove a  ``no-go''  theorem for the expressive power of the qubit-resetting architecture. By mapping the quantum dynamics to a classical state-affine system (SAS), we mathematically prove that the nonlinearity of the input-output mapping is strictly bounded by the polynomial degree of the classical encoding function, irrespective of the magic content of the dynamics. The quantum reservoir dynamics do not generate additional nonlinear degrees of freedom; they merely mix existing ones linearly. This inherent limitation dictates that the reservoir explores a strictly constrained function subspace, which implies its performance can, in principle, be matched or surpassed by a purely classical model. Furthermore, it imposes an unavoidable physical trade-off between nonlinearity and temporal memory capacity.

\textit{3. Infinite-order nonlinear expressivity via Hamiltonian encoding.---}
We show Hamiltonian encoding circumvents the expressivity ceiling of qubit resetting.
Here,   quantum magic is generated continuously by the Hamiltonian, while the ESP is enforced by the open-system Liouvillian spectral gap. As a result,  the two properties are structurally decoupled.
Furthermore, by embedding the temporal input directly into the continuous dynamics generator, the discrete-time update map assumes a fundamentally transcendental function form. This generates an infinite-order nonlinear response to the instantaneous drive,  completely evading the finite-degree polynomial ceiling of classical injection maps. The intrinsic non-commutativity of the open-system generators strictly governs the temporal coupling of these nonlinearities across different time steps, synthesizing a complex, non-separable processing of the input history. Consequently, the accessible feature space is no longer restricted to a fixed, finite-dimensional polynomial subspace, empowering the quantum reservoir to support a profound information processing capacity (IPC) hierarchy than is possible under qubit resetting.

\section{Fundamental Limits in Qubit Resetting and the Role of Magic}
\label{sec:magic_necessity}

The Gottesman-Knill theorem dictates that quantum circuits composed exclusively of Clifford gates (generated by $\{\text{CNOT}, H, S\}$) acting on stabilizer states can be simulated efficiently on classical hardware. Consequently, quantum magic (or non-stabilizerness) is universally recognized as a fundamental prerequisite for any potential quantum advantage. 
In this section, we introduce the qubit-resetting QRC architecture alongside the critical concept of dynamics-generated magic. 
To rigorously analyze the QRC system, we first formulate its mathematical framework within the Pauli-Liouville space. 
Building upon this foundation, we analytically demonstrate that in the absence of dynamics-generated magic (i.e., under random Clifford dynamics), a finite fraction of reservoirs can still satisfy the ESP. 
However, these reservoirs suffer from severe information loss, rendering them entirely ineffective as feature generators. 
Therefore, in the quantum regime, dynamics-generated magic constitutes a more fundamental prerequisite for effective reservoir computing than the ESP. 
Furthermore, we prove that even when endowed with sufficient magic, the qubit-resetting architecture still faces an intrinsic expressivity bottleneck stemming fundamentally from the linearity of quantum mechanics. 
Finally, we provide comprehensive numerical tests and benchmarking results that corroborate our theorems and further elucidate the vital role of dynamics-generated magic.

\subsection{The Qubit-Resetting QRC Architecture With/Without Dynamics-Generated Magic}
\label{subsec:qubit_reset_framework} 

Quantum Reservoir Computing adapts the classical reservoir computing paradigm by mapping low-dimensional temporal input sequences, $\{u_k\}_{k=1}^T$ (where $u_k \in [0,1]$), into the exponentially large Hilbert space of a quantum many-body system. In the qubit-resetting protocol, the encoding is non-unitary and local. 

The computation commences with the initialization of the $N$-qubit reservoir into a fiducial state $\rho_0$, typically chosen as the state $\ket{0}^{\otimes N}$ or a thermal state. At each discrete time step $k$, a designated subset of qubits (here, we consider the first qubit without loss of generality) is conceptually ``reset'' or re-initialized to an input-dependent pure state:
\begin{equation}
\ket{\psi^{(1)}_{\rm in}(u_k)} = \sqrt{1 - u_k} \ket{0} + \sqrt{u_k} \ket{1} ,
\label{eq:state_encoding}
\end{equation}
where $\ket{0}$ and $\ket{1}$ denote the standard computational basis states of a single qubit. 
Defining the input state as $\rho^{(1)}_{\rm in}(u_k) = \ket{\psi^{(1)}_{\rm in}(u_k)}\bra{\psi^{(1)}_{\rm in}(u_k)}$, the state injection is formally characterized by a 
local completely-positive-trace-preserving (CPTP) map. 
The state of the input qubit is discarded and replaced by $\rho^{(1)}_{\rm in}(u_k)$, while the remaining $N-1$ qubits preserve their state and correlations. The joint state immediately after the qubit resetting is:
\begin{equation}
\rho'_k = \rho^{(1)}_{\rm in}(u_k) \otimes \Tr_1[\rho_{k-1}] ,
\label{eq:reset_map}
\end{equation}
where $\Tr_1[\cdot]$ denotes the partial trace over the input qubit subspace. Physically, this operation acts as a local, discrete dissipation channel. In the context of open quantum systems, this non-unitary step irreversibly introduces entropy into the local subsystem, erasing old information from the specific input qubit while injecting the fresh input signal; simultaneously, the memory of past inputs is robustly retained in the state of the remaining $N-1$ qubits.

Following the injection, the system evolves under a unitary operator $U$, causing the input information to spread from the input qubit throughout the reservoir.
This unitary evolution effectively mixes the newly injected information with the retained information from past inputs: $\rho_{k} = U \rho'_k U^\dagger$. 
To maximize the accessible feature space without increasing the physical qubit overhead $N$, we employ a temporal multiplexing strategy that exploits the rich transient dynamics of the quantum system~\cite{Fujii2017Harnessing}. The evolution interval is subdivided into $V$ virtual time slots, $U = U_V \cdots U_1$. At each intermediate step $v \in \{1, \dots, V\}$, we measure the local Pauli-$Z$ expectation values $\langle Z_{i,v} \rangle$, concatenating them into a high-dimensional feature vector $\bm{h}_k \in \mathbb{R}^{NV}$.

In reservoir computing, it is required that the extracted feature vector should form an unbiased representation of the input time sequence, i.e., independent of the initial state. A necessary condition is that the reservoir dynamics obey ESP~\cite{jaeger_2001_esp, Izzet2012revisiting}. In classical reservoir computing, this has been taken as a vital prerequisite for effective temporal information processing. For a rigorous analysis of ESP in QRC, we provide a mathematical definition in Section~\ref{subsec:sas_reduction}.

During the training phase, a classical linear readout layer with weight matrix $W_{\text{out}}$ is optimized via ridge regression to map $\bm{h}_k$ to a target sequence $\overline{y}_k$. To rigorously quantify the computational performance of the quantum reservoir, we utilize two complementary metrics. The capacity $C$ measures the linear correlation between the predicted output $y_k$ and the target $\overline{y}_k$, capturing the reservoir's ability to reconstruct the structural features of the target function:
\begin{equation}
C = \frac{\operatorname{cov}^2(y_k, \overline{y}_k)}{\sigma^2(y_k) \cdot \sigma^2(\overline{y}_k)} ,
\label{eq:capacity}
\end{equation}
where $\operatorname{cov}(\cdot)$ and $\sigma^2(\cdot)$ denote the covariance and variance, respectively. A capacity value $C \to 1$ indicates perfect correlation. Additionally, to quantify the precise accuracy of the predictions, we compute the normalized mean square error (NMSE), which normalizes the error against the variance of the target signal:
\begin{equation}
\text{NMSE} = \frac{\sum_k (\overline{y}_k - y_k)^2}{\sum_k \overline{y}_k^2} .
\label{eq:nmse}
\end{equation}
Lower NMSE values correspond to higher predictive accuracy.

Within the proposed architecture, the performance of the quantum reservoir is fundamentally governed by the nature of the unitary operator $U$. 
A paramount characteristic of this operator is its inherent ``quantumness,'' which dictates whether the ensuing dynamics can be efficiently simulated via classical means. 
To rigorously evaluate this, we require a robust metric to quantify the amount of non-Clifford resources present during the evolution.
We adopt the framework of Stabilizer Rényi Entropy (SRE)~\cite{Leone2022Stabilizer}. 
Let $\tilde{\mathcal{P}}_N = \{ \pm 1, \pm i \} \times \{I, X, Y, Z\}^{\otimes N}$ denote the $N$-qubit Pauli group, and let $\mathcal{P}_N := \tilde{\mathcal{P}}_N / \{ \pm 1, \pm i \}$ be the quotient group of Pauli strings, which effectively factors out global phases. 
For a general mixed state $\rho$ in an $N$-qubit system of dimension $d = 2^N$, the 2-Rényi magic is defined as:
\begin{equation}
\tilde{M}_2(\rho) = M_2(\rho) - S_2(\rho) ,
\label{eq:sre_definition}
\end{equation}
where $S_2(\rho) = -\log \Tr[\rho^2]$ is the conventional 2-Rényi purity entropy, and $M_2(\rho)$ is the 2-Rényi SRE, which measures the state's overlap with the stabilizer code subspace:
\begin{equation}
M_2(\rho) = -\log \left[ d \cdot \Tr[Q \rho^{\otimes 4}] \right] .
\label{eq:m2_definition}
\end{equation}
Here, $Q = d^{-2} \sum_{P \in \mathcal{P}_N} P^{\otimes 4}$ denotes the projector onto the stabilizer code subspace. By quantifying the extent to which the quantum state spreads across the Pauli basis beyond the stabilizer structure, $\tilde{M}_2(\rho)$ provides a faithful, Clifford-invariant measure of its magic.

In the qubit-resetting QRC architecture, magic in the reservoir state can originate from three sources: the initial state, the input state $\rho_{\text{in}}(u_k)$, and the reservoir dynamics itself. Of these, 
the dynamical contribution is the most fundamental, as it is intrinsic to the reservoir and independent of the encoding. We refer to this contribution as {\it dynamics-generated magic} and distinguish two scenarios accordingly: Clifford dynamics, in which the reservoir evolution is classically simulable and generates no magic, and universal dynamics, in which the evolution itself produces non-stabilizer resources over time.

\subsection{Operator-Space Formalism and the Role of Magic in QRC Expressivity}
\label{subsec:sas_reduction}

To analyze the exact function form of the reservoir's input-output map, we switch from the density matrix formalism to the Pauli-Liouville operator space. This formulation allows us to precisely track how input information flows and spreads through the operator space of the system. We expand the state of the $N$-qubit reservoir at time $k$, denoted $\rho_k$, in the orthogonal basis of $N$-qubit Pauli strings $\mathcal{P}_N$. The reservoir state vector $r_k \in \mathbb{R}^{4^N}$ is defined via the Hilbert-Schmidt inner product:
\begin{equation}
    r_{k,i} := \frac{1}{\sqrt{2^N}} \Tr\left[ \rho_k B^N_i \right], \quad i \in [4^N]_0,
\label{eq:liouville_state}
\end{equation}
where $\{B^N_i\}$ is a lexicographically ordered basis of the $N$-qubit Pauli strings $\mathcal{P}_N$, and $r_{k,i}$ denotes the $i$-th component of the vector $r_k$. 
Throughout this work, we use the notation $[n]_0:=\{0,1,\ldots,n-1\}$ for the set of the first $n$ nonnegative integers. 
Because any valid quantum state preserves a unit trace, the zeroth component, which corresponds to the global identity operator ($B^N_0 = I^{\otimes N}$), is strictly fixed at $r_{k,0} = 1/\sqrt{2^N}$.

In the standard single-qubit resetting scheme with a single input qubit, the first qubit is periodically re-initialized to an input state $\rho^{(1)}_{\rm in}(u_k) = \frac{1}{2}\left[ \bm{s}_k(u_k) \cdot \bm{\sigma} \right]$, which is the most general form for a single-qubit input state. 
Here, $\bm{\sigma}=(I,X,Y,Z)$ denotes the Pauli-operator vector, while $\bm{s}_k(u_k)=\bigl(1,s_{k,1}(u_k),s_{k,2}(u_k),s_{k,3}(u_k)\bigr)$
is the augmented Bloch vector, with each component $s_{k,\mu}(u_k)\in[-1,1]$ being a scalar function of $u_k$.
For standard amplitude encoding [Eq.~\eqref{eq:state_encoding}], $\bm{s}_k = (1, 2\sqrt{u_k(1-u_k)}, 0, 1-2u_k)$. 
The state update during the input injection phase, $\rho'_k = \rho^{(1)}_{\rm in}(u_k) \otimes \Tr_1[\rho_{k-1}]$ [Eq.~\eqref{eq:reset_map}], acts as a linear transformation $r'_k = \mathcal{R}(u_k) r_{k-1}$ in the operator space. As derived in detail in Appendix~\ref{app:reset_derivation}, the resetting matrix $\mathcal{R}(u_k) \in \mathbb{R}^{4^N \times 4^N}$ assumes a highly sparse block-column structure:
\begin{equation}
    \mathcal{R}(u_k) = \begin{pmatrix}
     I_c & \bm{0} & \bm{0} & \bm{0} \\
     s_{k,1} I_c & \bm{0} & \bm{0} & \bm{0} \\
     s_{k,2} I_c & \bm{0} & \bm{0} & \bm{0} \\
     s_{k,3} I_c & \bm{0} & \bm{0} & \bm{0}
    \end{pmatrix}, \label{eq:reset_matrix}
\end{equation}
where $c := 4^{N-1}$ and $I_c$ is the $c \times c$ identity matrix. This matrix explicitly destroys all information stored in Pauli strings featuring non-identity operators on the first qubit (columns 2, 3, 4). Meanwhile, information residing purely in the $N-1$ un-reset qubits is preserved and broadcast to other sectors (column 1), weighted by the classical input features $\bm{s}_k(u_k)$.

Following the resetting, the system evolves under a unitary $U$, corresponding to an orthogonal transformation $\mathcal{O} \in \mathrm{SO}(4^N)$ with elements $\mathcal{O}_{ij} = 2^{-N} \Tr[ B^N_i U B^N_j U^\dagger ]$ in the operator space. The full state update is $r_k = \mathcal{O} \mathcal{R}(u_k) r_{k-1}$. 
Crucially, because the subsequent resetting step will again definitively discard all but the first $c$ components of $r_k$, we can track only the effective memory state of the $N-1$ un-reset qubits, denoted $x_k \in \mathbb{R}^c$. By partitioning $\mathcal{O}$ into block matrices (see Appendix~\ref{app:sas_derivation}), the quantum reservoir dynamics identically reduce to the following compact, closed-loop recurrence:
\begin{equation}
    \begin{cases}
    x_{k} = (\bm{s}_{k} \cdot \mathbf{A}) \, x_{k-1} + \bm{s}_{k} \cdot \mathbf{b}, \\[6pt]
    y_k = W^\mathrm{T} x_k,
    \end{cases} \label{eq:sas_framework}
\end{equation}
where $\mathbf{A} \in (\mathbb{R}^{c \times c})^4$ and $\mathbf{b} \in (\mathbb{R}^c)^4$ are constant matrices and vectors determined entirely by the unitary $U$, and $W$ is the trained readout weight.

This mathematical reduction is pivotal to our analysis: it establishes an exact isomorphism between the complex quantum qubit-resetting protocol and a classical SAS. Physically, this reveals that the quantum reservoir acts analogously to a classical driven filter whose internal transition rates change linearly with the input Bloch vector $\bm{s}_k(u_k)$. The quantum unitary dynamics, no matter how entangled, do not introduce any intrinsic nonlinear activation; they act purely as linear mixing coefficients.

Within this mathematical framework, let $x_k$ denote the effective state vector at the $k$-th step and we can rigorously describe the architecture of reservoir computing. 
Given a unitary operator $U$, the state-update equation Eq.~\eqref{eq:sas_framework} naturally defines a reservoir map $M_U: [0,1] \times \mathbb{R}^c \to \mathbb{R}^c$ such that $x_k = M_U(u_k, x_{k-1})$. 
In the context of reservoir computing, we begin from an initial state $x_{-\tau}$ at time step $k=-\tau$, where $\tau$ is a positive integer.
By iteratively applying the map $M_U$ driven by the input sequence $\bm{u}=\{u_k\}_{k=-\tau+1}^{0}$, we obtain the full feature vector $x_0=x_0(x_{-\tau},\bm{u}, U)\in\mathbb{R}^c$ at time step $k=0$. Furthermore, this framework provides a basis to rigorously define the Echo State Property (ESP). 

\begin{definition}[Echo State Property]\label{def:esp}
Let $\mathcal{X} \subset \mathbb{R}^c$ be the effective state space of a quantum reservoir with a unitary operator $U$. An output index $i$ of the reservoir satisfies the ESP if, for almost all left-infinite input sequences $\bm{u}=\{u_k\}_{k\in\mathbb{Z}_{\le 0}}$ (with respect to the product Lebesgue measure), the influence of the initial conditions vanishes asymptotically:
\begin{equation}
\lim_{\tau \to \infty} \left| x_{0,i}(x_{-\tau}, \bm{u}, U) - x_{0,i}(x'_{-\tau}, \bm{u}, U) \right| = 0, 
\label{eq:vanishing}
\end{equation}
for any pair of initial states $x_{-\tau}, x'_{-\tau} \in \mathcal{X}$. Here, $x_{0,i}(x_{-\tau}, \bm{u})$ denotes the $i$-th component of the state vector at time $k=0$, resulting from evolving the initial state $x_{-\tau}$ from time $k=-\tau$ under $M_U$ driven by the input sequence $\bm{u}$. A quantum reservoir is said to satisfy the ESP if all of its output indices satisfy this property.
\end{definition}

Slightly different from the conventional definitions of the ESP in previous works~\cite{jaeger_2001_esp, chen2019learning, pena2023quantum}, we require the vanishing condition Eq.~\eqref{eq:vanishing} to hold for \textit{almost all} input sequences. 
This measure-theoretic relaxation rigorously excludes pathological, measure-zero subsets of inputs (e.g., strictly constant sequences) that could artificially trap the dynamics and prevent convergence. Physically, this condition is well-justified, as the temporal signals processed by a practical quantum reservoir inherently exhibit fluctuations.  
Furthermore, while Definition~\ref{def:esp} evaluates the asymptotic convergence at the specific time step $k=0$, this choice is made strictly without loss of generality. Because the underlying reservoir map is time-translation invariant, the asymptotic convergence and all subsequent analytical results derived herein apply equally to the reservoir state $x_k$ at any arbitrary finite time step $k \in \mathbb{Z}$.


Now, we can analytically elucidate why the qubit-resetting QRC without dynamics-generated magic (i.e., with random Clifford circuits) can satisfy the ESP, yet fundamentally fail to yield an effective quantum reservoir. A random Clifford circuit $U \in \mathcal{C}_N$ normalizes the Pauli group, inducing a strict bijection $f: [4^N]_0 \to [4^N]_0$ on the Pauli indices along with a phase function $\eta: [4^N]_0 \to \{-1,+1\}$, such that $U^\dagger B^N_i U = \eta(i) B^N_{f(i)}$. Consequently, the action of a Clifford circuit maps the input history backward in time through discrete, non-branching permutation trajectories in operator space.

\begin{theorem}[QRC under Clifford Dynamics] \label{thm:esp_without_magic}
Consider an $N$-qubit quantum reservoir governed by a random Clifford circuit $U\in \mathcal{C}_N$, employing a single-qubit resetting scheme with the input state $\rho^{(1)}_{\rm in}(u_k) = \frac{1}{2}(I + u_k Z)$. In the thermodynamic limit ($N \to \infty$), the reservoir satisfies the ESP with a probability of $3/4$. 
Conditioned on the ESP being satisfied, for any left-infinite input sequences $\bm{u}=\{u_k\}_{k\in\mathbb{Z}_{\le 0}}$ and any initial state $x_{-\tau}$, the expected number of nonzero Pauli measurements at time step $k=0$, averaged over the Clifford circuit ensemble, is bounded above by 3 in the thermodynamic limit. Formally, 
\begin{equation}
    \lim_{N \to \infty} \mathbb{E}_{U\in \mathcal{C}_N} \left[ \lim_{\tau \to \infty} \Vert{} x_0(x_{-\tau}, \bm{u}, U)\Vert{}_0 \right] \le 3,
\end{equation}
where $ \| a\|_0$ denotes the number of nonzero entries of $a$. 
\end{theorem}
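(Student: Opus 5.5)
Throughout, the Heisenberg picture in the Pauli--Liouville basis is used. For the input state $\rho^{(1)}_{\rm in}(u_k)=\frac12(I+u_kZ)$ the augmented Bloch vector is $\bm{s}_k=(1,0,0,u_k)$. Because $U$ is Clifford, $\mathcal{O}$ is a signed permutation matrix, so the recurrence Eq.~\eqref{eq:sas_framework} becomes \emph{non-branching}. For each effective index $i\in[c]_0$ (a string $I\otimes P_i$), write $U^\dagger(I\otimes P_i)U=\eta(i)\,\sigma_{\mu(i)}\otimes P_{g(i)}$. Then
\begin{equation}
x_{k,i}=\begin{cases}\pm x_{k-1,g(i)}, & \mu(i)=I,\\ \pm u_k\,x_{k-1,g(i)}, & \mu(i)=Z,\\ 0, & \mu(i)\in\{X,Y\},\end{cases}
\end{equation}
since $s_{k,1}=s_{k,2}=0$. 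Thus each output component follows a single backward trajectory $i\to g(i)\to g^2(i)\to\cdots$ under a partial map $g$ on $[c]_0$. Along this trajectory it is multiplied by signs and by factors $u_k$, and it is killed as soon as an $X$ or $Y$ appears on the input qubit. Since $f$ fixes the identity, the index $0$ is a fixed point of $g$ with $x_{k,0}=2^{-N/2}$.

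\textbf{ESP.} Following the trajectories, $x_{0,i}$ depends on the initial condition only if the trajectory from $i$ never dies and never reaches $0$. It must then eventually enter a cycle of $g$ avoiding $0$. If that cycle contains at least one $Z$-step, the initial-condition dependence is multiplied by a product of infinitely many $u_k$'s. This product tends to $0$ for almost every input sequence, since $u_k<1$ almost surely. Hence ESP fails exactly when $g$ has a cycle consisting purely of $I$-steps. Such a cycle is a cycle of the permutation $f$ lying entirely in the identity-on-qubit-1 sector.

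As $N\to\infty$, the Clifford-induced permutation on nonidentity Pauli strings should behave like a uniform random permutation with respect to short cycle statistics. A given string lands in the $I$-sector with probability $\to1/4$. The expected number of $L$-cycles of a random permutation is $1/L$, so the number of bad cycles is asymptotically Poisson with mean
\[
\sum_{L\ge1}\frac{1}{L}4^{-L}=\ln\tfrac43 .
\]
This gives $\Pr[\text{ESP}]\to e^{-\ln(4/3)}=3/4$.

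\textbf{Main obstacle.} Justifying this Poisson limit is the step I expect to be hardest. The Clifford group is only a unitary 3-design, not a uniformly random permutation of $\mathcal{P}_N$. The plan is to use the fact that, for fixed $L$, the image of a fixed set of $L$ independent Pauli strings under a random Clifford is asymptotically uniform over independent tuples, with dependent (commutation-constrained) configurations contributing $O(2^{-N})$. Then I would compute the factorial moments of the bad-cycle counts by the method of moments.

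\textbf{Sparsity.} Conditioned on ESP, in the $\tau\to\infty$ limit the only surviving nonzero components are indices $i$ whose trajectory reaches $0$ before dying. This is because any trajectory that neither dies nor reaches $0$ must enter a cycle with a $Z$-step, and those contributions vanish. So $\|x_0\|_0$ is at most the size of the backward tree of $g$ rooted at $0$.

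\emph{Root.} The root $0$ has a child only through $f^{-1}(Z\otimes I)$ (the $I\otimes I$ preimage is $0$ itself). That preimage lies in the $I$-sector with probability $\to1/4$.

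\emph{Other nodes.} Every other node $j$ has potential children $f^{-1}(I\otimes P_j)$ and $f^{-1}(Z\otimes P_j)$, each in the $I$-sector with probability $\to1/4$. The expected offspring number is therefore $1/2$.

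Treating the tree as a subcritical Galton--Watson tree, with the same asymptotic-independence input as above, the expected tree size is
\[
1+\tfrac14\cdot\frac{1}{1-1/2}=\tfrac32 .
\]
Conditioning on an event of probability $3/4$ inflates this expectation by at most a factor $4/3$, giving a limit of at most $2\le3$. This proves the claimed bound, and the slack should absorb the $O(2^{-N})$ corrections from the non-uniformity of the Clifford ensemble.
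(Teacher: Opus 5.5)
Your reduction is correct and matches the paper's Lemma~\ref{lem:esp_under_clifford}: ESP fails iff the Clifford-induced map has a cycle lying entirely inside the identity-on-qubit-$1$ sector. The gap is the step you flag, $p_{\rm ESP}\to e^{-\ln(4/3)}$. The paper takes the same step by citing ``cycle index theorems'' without computing anything, so it supplies no fix either. Your proposed repair fails because a Clifford acts on Pauli labels as an $\mathbb{F}_2$-linear symplectic map, not as a random permutation. Let $A$ be the label map of $P\mapsto U^\dagger PU$ on $\mathbb{F}_2^{2N}$, and let the identity sector be the subspace $W=\langle X_1,Z_1\rangle^{\perp}$. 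Linearity breaks your plan in three ways.

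(i) A random Clifford sends a tuple to a uniform tuple with the \emph{same linear relations and commutation matrix}, not to a uniform independent tuple. A prescribed cycle on $L$ independent strings therefore occurs with probability $\approx 2^{\binom{L}{2}}4^{-NL}$ if its commutation pattern is shift-invariant, and with probability $0$ otherwise. For example, the expected number of $2$-cycles in $W$ tends to $1/16$, not $1/32$.

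(ii) Linearly dependent cycles such as $v\to w\to v+w\to v$ occur whenever $x^2+x+1$ divides the characteristic polynomial, and they contribute order one, not $O(2^{-N})$.

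(iii) Bad cycles cluster, because $\ker(A-I)\cap W$ is a subspace. One finds $\Pr[\ker(A-I)\cap W=0]\to\frac{15}{8}\prod_{i\ge1}(1+2^{-i})^{-1}\approx0.786$, whereas a Poisson count of mean $1/4$ would give $e^{-1/4}\approx0.779$.

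So the factorial moments you would compute are not Poisson. A workable route uses the chain: ESP $\iff$ $W$ contains no nonzero $A$-invariant subspace $\iff$ $X_1,Z_1$ generate $\mathbb{F}_2^{2N}$ as an $\mathbb{F}_2[A]$-module. Then, for each irreducible factor $\phi$, compute the probability that $X_1,Z_1$ span $V_\phi/\phi(A)V_\phi$. Use the genuine $\mathrm{Sp}(2N,\mathbb{F}_2)$ rational-canonical-form statistics: symplectic fixed-space type for $x+1$, unitary type for self-reciprocal $\phi$, and $\mathrm{GL}$ type for pairs $\{\phi,\phi^*\}$. Multiplying these probabilities does give $3/4$ numerically. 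That is an identity you must establish; it does not follow from Poisson cycle counts.

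Your sparsity argument is sound in expectation and differs usefully from the paper's. The paper runs a forward branching process from the post-reset $Z\otimes I$ component. Its ``$3$'' actually counts nonzero entries of the full Pauli vector, since $\|r'_0\|_0=2\|x_{-1}\|_0$. Your backward tree counts entries of $x_0$ itself, as the statement requires, and gives $3/2$. You also handle the conditioning on ESP via $\mathbb{E}[T\mathbf{1}_{\rm ESP}]/\Pr[{\rm ESP}]$, which the paper ignores. Your independent-offspring model is not literally right. Once $A^{-1}z_1\in W$ (with $z_1$ the label of $Z\otimes I$), the two children $A^{-1}j$ and $A^{-1}(j+z_1)$ appear together or not at all. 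The expected tree size uses only first moments, however, so $3/2$ survives. Moreover, the final bound $\le 3$ needs only $\Pr[{\rm ESP}]\ge 1/2$.
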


\textit{Proof.} 
A rigorous combinatorial proof tracing back the trajectory of any output index $i$ is provided in Appendix~\ref{app:proof_thm_esp}. \hfill $\blacksquare$

For concreteness, we have considered the simplest $Z$-basis encoding. 
For a different single-qubit input state, the probability $3/4$ and the expected number $3$ may shift to a different value of order $O(1)$ with respect to the system size $N$. Details are provided in Appendix~\ref{app:proof_thm_esp}. 

Theorem~\ref{thm:esp_without_magic} demonstrates that part of the Clifford reservoirs can indeed satisfy the ESP, a direct consequence of their dynamics strictly permuting Pauli strings rather than coherently scrambling them. 
However, it also proves that the ESP alone is fundamentally insufficient for effective quantum temporal processing. In the absence of non-Clifford resources, the reservoir suffers from catastrophic information loss, yielding that the expected number of nonzero Pauli components remains strictly $O(1)$ regardless of the system size $N$. Crucially, the nonzero components of a random Clifford reservoir are randomly scattered across the $4^N - 1$ non-identity Pauli bases. While general QRC standardly utilizes local observables as measurement features, their expectation values in this case are typically zero due to the exponentially large operator space. Therefore, a random Clifford circuit fails to serve as an effective reservoir.

Furthermore, the bijective nature of the Clifford circuit, along with the sparse block-column structure of the resetting operation [Eq.~\eqref{eq:reset_matrix}], allows a classical computer to efficiently trace back the trajectory of any output index $i$. 
Consequently, a Clifford reservoir computer satisfying the ESP can be efficiently simulated classically.

These limitations cannot be circumvented either by the magic in the initial state of the reservoir, which will be erased by the ESP, or by the magic injected through $\rho^{(1)}_{\rm in}(u_k)$, which simply scales the Pauli coefficients according to Eq.~\eqref{eq:reset_matrix} and cannot alter the underlying permutation structure $f$ of the Clifford unitary. 
These two sources of magic are fundamentally incapable of producing an effective reservoir or overcoming classical simulability.
Genuine quantum advantage requires magic to be generated \textit{intrinsically} by non-Clifford reservoir dynamics, which provides the only mechanism capable of expanding the accessible Liouville space and autonomously generating complex many-body correlations.

The preceding analysis establishes dynamics-generated magic as a more fundamental prerequisite for reservoir computing than the ESP in the quantum regime. 
However, even with sufficient dynamics-generated magic, the qubit-resetting architecture still has fundamental constraints. 
We rigorously prove this structural expressivity limit in the following. 

The origin of this limitation is not the absence of quantum resources, but the specific way in which the classical input is injected: the reset map introduces input dependence only through a finite-dimensional classical encoding Bloch vector $\bm{s}_k$, while the subsequent quantum evolution acts linearly in Pauli-Liouville space. Consequently, the accessible input-output functions remain confined to a bounded-degree polynomial class. To make this limitation explicit, we generalize the single-qubit resetting scheme to $m$ input qubits, with $m<N$, which are discarded and replaced by input-dependent states at each discrete time step. 

\begin{theorem}[Expressivity Limit of Qubit Resetting]\label{thm:expressivity_limit}
Consider an $N$-qubit reservoir with $m$ input-qubits, which are reset at each step according to the input time sequence. The qubit resetting is assumed to take a standard form, 
$\rho^{(m)}_{\rm in}(u_k) = \left[ \frac{1}{2}(I + u_k Z) \right]^{\otimes m}$. 
Then, the expectation value of any reservoir observable at time $k=0$ belongs strictly to the polynomial space:
\begin{equation}
    x_{0,i} \in \mathbb{R}_m\!\left[u_{-\infty},\ldots,u_0\right], \quad \forall\, i,
\end{equation}
where $\mathbb{R}_m[u_{-\infty},\ldots,u_0]$ denotes the space of real polynomials depending on finitely many past inputs, such that the maximum degree of each individual variable $u_k$ is strictly bounded by $m$.
\end{theorem}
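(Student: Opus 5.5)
We generalize the state-affine reduction of Eq.~\eqref{eq:sas_framework} from one to $m$ input qubits and then unroll the recurrence. Write the product input state in Pauli-Liouville form. Each factor $\frac{1}{2}(I+u_kZ)$ has augmented Bloch vector $(1,0,0,u_k)$, so the $m$-qubit input has Pauli components indexed by strings $\alpha\in\{I,X,Y,Z\}^{m}$. These components are $\prod_{j=1}^m s_{\alpha_j}(u_k)$, which vanish unless every $\alpha_j\in\{I,Z\}$ and otherwise equal $u_k^{|\alpha|_Z}$. Hence every nonzero entry of the $m$-qubit augmented vector $\bm{S}_k:=\bm{s}_k^{\otimes m}$ is a monomial $u_k^{d}$ with $0\le d\le m$.

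Next we redo the derivation of Appendix~\ref{app:reset_derivation} with $c:=4^{N-m}$. The reset matrix keeps only the block of Pauli strings that are identity on the $m$ input qubits, and broadcasts it to the $4^m$ sectors with weights $S_{k,\alpha}$:
\begin{equation}
\mathcal{R}(u_k)=\bm{S}_k\,e_0^{\mathrm T}\otimes I_c .
\end{equation}
Partitioning $\mathcal{O}$ into $4^m\times 4^m$ blocks of size $c$, and keeping only the identity-on-inputs block after $\mathcal{O}$ (since the next reset discards the rest), gives the recurrence
\begin{equation}
x_k=\Big(\sum_\alpha S_{k,\alpha}A_\alpha\Big)x_{k-1}+\sum_\alpha S_{k,\alpha}b_\alpha .
\end{equation}
The constant matrices $A_\alpha$ and vectors $b_\alpha$ depend only on $U$; the constant trace component $r_{k,0}$ is absorbed into the $b_\alpha$ as in the single-qubit case. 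Write $\mathcal{A}(u_k):=\sum_\alpha S_{k,\alpha}A_\alpha$ and $\mathcal{B}(u_k):=\sum_\alpha S_{k,\alpha}b_\alpha$. The entries of both are polynomials in $u_k$ alone of degree at most $m$.

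Unrolling from $x_{-\tau}$ gives
\begin{equation}
x_0=\mathcal{A}(u_0)\cdots\mathcal{A}(u_{-\tau+1})\,x_{-\tau}+\sum_{j=0}^{\tau-1}\mathcal{A}(u_0)\cdots\mathcal{A}(u_{-j+1})\,\mathcal{B}(u_{-j}).
\end{equation}
The key observation is that each variable $u_k$ enters any single term through exactly one factor, either $\mathcal{A}(u_k)$ or $\mathcal{B}(u_k)$. Multiplying matrices whose entries are polynomials in distinct variables produces entries multilinear across factors, so in every finite-$\tau$ term each $u_k$ has individual degree at most $m$. The same holds for the whole finite sum. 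Therefore every component $x_{0,i}$ at finite $\tau$ lies in $\mathbb{R}_m[u_{-\tau+1},\ldots,u_0]$.

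The main obstacle is the limit $\tau\to\infty$, since the statement concerns dependence on arbitrarily far past inputs. Under the ESP, the first term's dependence on $x_{-\tau}$ is washed out, and $x_0$ is the limit of the partial sums, i.e.\ a convergent series of such polynomials. The series itself need not be a polynomial in finitely many variables. We will therefore interpret the statement as in the paper's definition of $\mathbb{R}_m[u_{-\infty},\ldots,u_0]$: the limit is a (convergent) series whose every monomial involves finitely many past inputs, each raised to power at most $m$. Monomial-wise degree bounds are preserved under coefficientwise limits, because the coefficient of any monomial with some exponent exceeding $m$ is zero in every partial sum. We would establish this by showing that, for fixed finite window length $L$, the coefficients of monomials supported in $\{u_{-L},\ldots,u_0\}$ stabilize as $\tau$ grows. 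This is plausible because contributions from earlier steps enter only through factors involving earlier variables or constant terms. If ESP fails, we would instead state the result for each fixed $\tau$, where it holds exactly.
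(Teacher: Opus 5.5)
Your proposal follows the paper's proof essentially step for step. It uses the $m$-qubit Bloch vector $(1,0,0,u_k)^{\otimes m}$ with monomial entries of degree at most $m$, the generalized state-affine recurrence with $c=4^{N-m}$, and the unrolled series in which each $u_k$ enters through exactly one factor, with the ESP used to discard the initial condition.

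Your extra caution about $\tau\to\infty$ is justified, since the paper simply asserts that the infinite sum lies in $\bigcup_k\mathbb{R}_m[u_{-k},\ldots,u_0]$. However, the coefficient stabilization you call plausible holds for a sharper reason than the one you give. By unitality the traceless bias has no constant term, $\mathcal{B}(0)=b_0=0$. So every monomial of the $j$-th term $\mathcal{A}(u_0)\cdots\mathcal{A}(u_{-j+1})\mathcal{B}(u_{-j})$ contains $u_{-j}$, and a monomial supported in $\{u_{-L},\ldots,u_0\}$ receives contributions only from $j\le L$. Its coefficient is therefore exactly frozen for $\tau>L+1$. Genuine constant-term contributions, by contrast, would give only a Neumann series in $\mathcal{A}(0)$, which the almost-everywhere ESP does not control.
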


\textit{Proof.} Iterating the $m$-qubits extension of the SAS recurrence backward in time generates a Volterra-like series. Because the state injection acts as a linear broadcast in operator space at each time step, variables associated with the exact same instantaneous input $u_k$ are never multiplied together during the dynamical evolution. Summing over the system history thus preserves a strict polynomial degree bound. The complete proof is detailed in Appendix~\ref{app:proof_thm_expressivity}. \hfill $\blacksquare$

To rigorously isolate the intrinsic nonlinearity of the reservoir dynamics,
our theorem introduces a minimal encoding scheme for the $m$ input qubits, ensuring that no extraneous nonlinearity originates from the data encoding process. 
In the fundamental single-qubit case ($m=1$), we adopt an encoding where the Bloch vector $\bm{s}_k(u_k)=(1,0,0,u_k)$ scales strictly linearly with the input $u_k$. 
The multi-qubit input state ($m>1$), denoted as $\rho^{(m)}_{\rm in}(u_k)$, is then constructed via the tensor product of these single-qubit states. 
Without loss of generality, this theorem naturally generalizes to an arbitrary input state.
Specifically, for an arbitrary $m$-qubit input state, the polynomial space generalizes from the original $\mathbb{R}_m[\{u_k\}_{k\in\mathbb{Z}_{\le 0}}]$ to $\mathbb{R}_m[\{s_{k,\mu}\}_{k\in\mathbb{Z}_{\le 0}, \mu}]$.

Theorem~\ref{thm:expressivity_limit} establishes a central structural limitation of the qubit-resetting architecture. It reveals that any nonlinear dependence of the output on the temporal input sequence is inherited entirely from the classical encoding map; the quantum reservoir dynamics simply determine how these pre-existing nonlinear features are linearly mixed. Consequently, the input-output functions generated by this architecture are rigorously confined to the bounded-degree polynomial space $\mathbb{R}_m$. The hypothesis class of the quantum reservoir is thus restricted to a mere subspace of $\mathbb{R}_m$.

This structural limitation imposes severe consequences for classical simulability at the level of function classes. For temporal tasks requiring a fixed memory window of length $L$, the space $\mathbb{R}_m$ admits a classically constructible basis consisting of $(m+1)^L$ monomials, which scales polynomially with the number of reset qubits $m$ and does not scale with the system size $N$. Therefore, one can always construct a purely classical model whose hypothesis class fully encapsulates that of the qubit-resetting reservoir by simply computing these polynomial features and applying a classical linear readout. After training over the same memory window, such a classical model is mathematically guaranteed to match or even outperform the best predictor obtainable from the quantum reservoir, as the QRC system only explores a constrained subset of the exact same ambient function space. 
In fact, a model sharing the same conceptual core has already been introduced in the context of classical reservoir computing~\cite{gauthier2021next}. Qubit-resetting QRC therefore offers no genuine quantum advantage at the level of accessible input–output function classes.

To enhance the nonlinearity (i.e., the accessible polynomial degree $m$) of a qubit-resetting QRC, one is forced to simultaneously inject the input into a larger subset of qubits. However, physically resetting $m$ qubits actively destroys the quantum information residing in that fraction of the system. This inherently increases the local dissipation rate, severely degrading the reservoir's temporal memory capacity. This leads to an unavoidable trade-off: \textit{Within the qubit-resetting paradigm, a quantum reservoir cannot simultaneously achieve high nonlinear expressivity and a long temporal memory capacity}.

\begin{figure*}[t]
\centering
\includegraphics[width=1\textwidth]{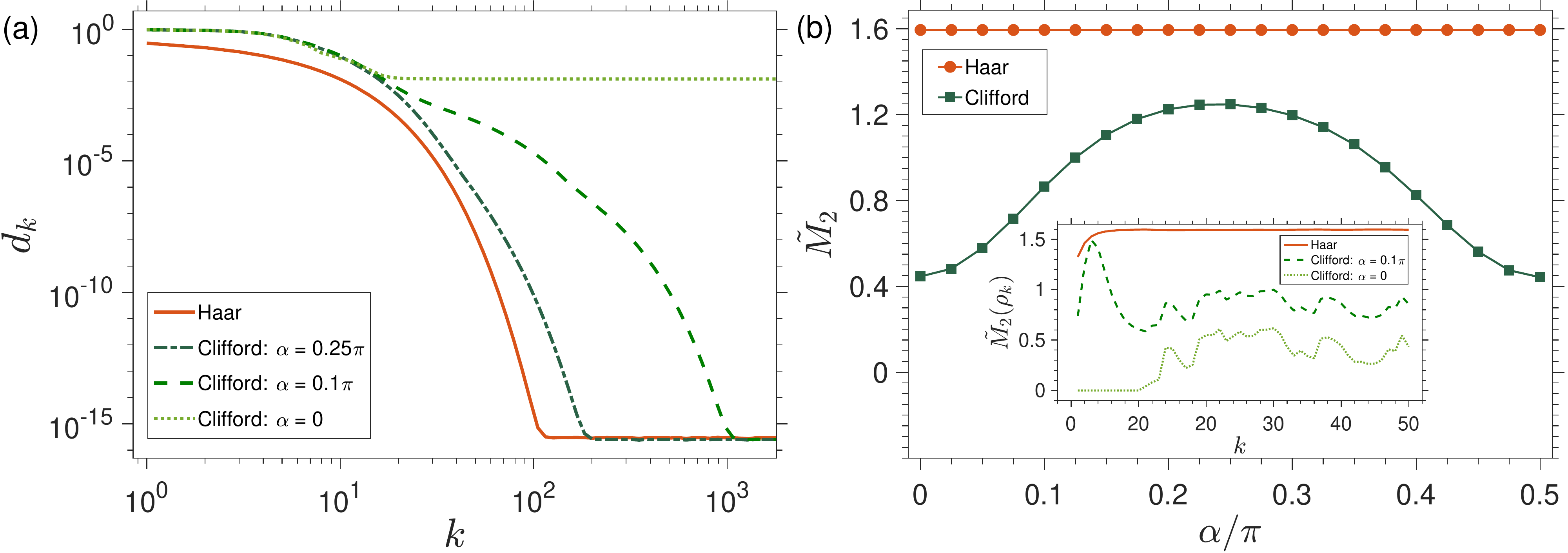}
\caption{
\textbf{Relationship between dynamics-generated quantum magic and the echo state property (ESP).} 
(a) Temporal evolution of the trace distance $d_k$ between two reservoir states prepared in distinct conditions. The reservoir dynamics correspond to a parameterized unitary family $U(\alpha)=R_z(\alpha)C$, which is compared against a Haar-random benchmark.
In the pure Clifford limit ($\alpha=0$), the ensemble-averaged $d_k$ rapidly converges to a finite, nonzero asymptotic value ($\sim10^{-2}$), reflecting the contribution of Clifford realizations that fail to erase the initial condition; this averaged behavior is consistent with Theorem~\ref{thm:esp_without_magic}, where only a finite fraction of Clifford reservoirs satisfy the ESP. In the non-Clifford case ($\alpha\neq0$), $d_k$ decays to a negligible value $(\sim10^{-15})$, providing numerical evidence for the ESP.  
The Haar-random benchmark exhibits the most rapid decay. 
(b) Time-averaged 2-R\'enyi magic $\tilde{M}_2$ as a function of the tuning angle $\alpha$. The temporal average is computed as $K^{-1}\sum_{k=1}^{K}\tilde{M}_2(\rho_k)$ over $K=10^4$ simulated time steps. 
The magic is minimal in the non-Clifford limit ($\alpha =0 $ and $0.5\pi$), which is generated purely from data injection. It becomes larger as we move away from the non-Clifford limit. 
Inset: Time evolution of the instantaneous magic $\tilde{M}_2(\rho_k)$ for representative Clifford and Haar-random reservoirs. Across all panels, data points are ensemble-averaged over 200 independent joint samples with system size $N = 6$. Each sample consists of a random unitary realization paired with an independently generated input sequence, where the inputs $u_k$ are drawn uniformly from $[0,1]$.}
\label{Fig_1}
\end{figure*}

\subsection{Numerical Tests and Benchmark Performance}
\label{subsec:magic_and_esp}

We now numerically examine the role of dynamics-generated magic in the qubit-resetting architecture. The goal of this subsection is twofold: first, to test how non-Clifford dynamics affects ESP, and second, to evaluate whether the same magic generation is reflected in practical benchmark performance. The numerical validation of the intrinsic expressivity bottleneck established in Theorem~\ref{thm:expressivity_limit} is deferred to Sec.~\ref{subsec:information_processing_capacity}, where the qubit-resetting and Hamiltonian-encoding architectures are directly compared within the same IPC framework. Additional numerical checks of Theorem~\ref{thm:esp_without_magic}, including the finite-size convergence of $p_{\rm ESP}$ to $3/4$ and the $O(1)$ scaling of the surviving nonzero Pauli components, are provided in Appendix~\ref{app:proof_thm_esp}.

To diagnose the ESP operationally, we monitor the trace distance
$
d_k=\frac{1}{2}|\rho_k-\tilde{\rho}_k|_1
$
between two reservoir trajectories initialized from different states, $\rho_0$ and $\tilde{\rho}_0$, but driven by the same input sequence. We consider the tunable unitary family
$
U(\alpha)=R_z(\alpha)C,
$
where $C$ is a random Clifford circuit and $R_z(\alpha)=\exp[-i(\alpha/2)\sum_j Z_j]$ is a global non-Clifford rotation. The angle $\alpha$ therefore controls the amount of dynamics-generated magic. As a fully chaotic benchmark, we also compare with Haar-random unitaries.

Figure~\ref{Fig_1}(a) shows a clear separation between the Clifford and non-Clifford regimes. In the Clifford limit $\alpha=0$, the ensemble-averaged trace distance approaches a finite nonzero value, reflecting the finite fraction of Clifford reservoirs that fail to erase the initial condition, consistent with Theorem~\ref{thm:esp_without_magic}. Once non-Clifford rotations are introduced, the trace distance decays to zero for the sampled dynamics, indicating restoration of the ESP. Although the generated magic depends non-monotonically on $\alpha$, the decay rate increases with the amount of dynamics-generated magic along this parameter sweep: larger $\tilde{M}_2$ corresponds to faster decay of 
the trace distance, $d_k$. The Haar-random benchmark exhibits the fastest convergence.

Figure~\ref{Fig_1}(b) further clarifies the distinction between the dynamics-generated magic and the input-generated magic, which is brought into the system by the input state. Even when $\alpha=0$, the input injection can produce a finite baseline of state magic, because generic input states are non-stabilizer states. However, Clifford dynamics cannot amplify this non-stabilizerness localized on the input qubit into many-body operator spreading. Consequently, this input-generated magic alone 
cannot remedy the information-loss issue established in Theorem~\ref{thm:esp_without_magic}. 
Non-Clifford dynamics, by contrast, delocalize Pauli strings across operator space, enabling the reset-induced dissipation to propagate beyond the local injection site and thereby promote the ESP.

\begin{figure*}
\centering
\includegraphics[width=1\textwidth]{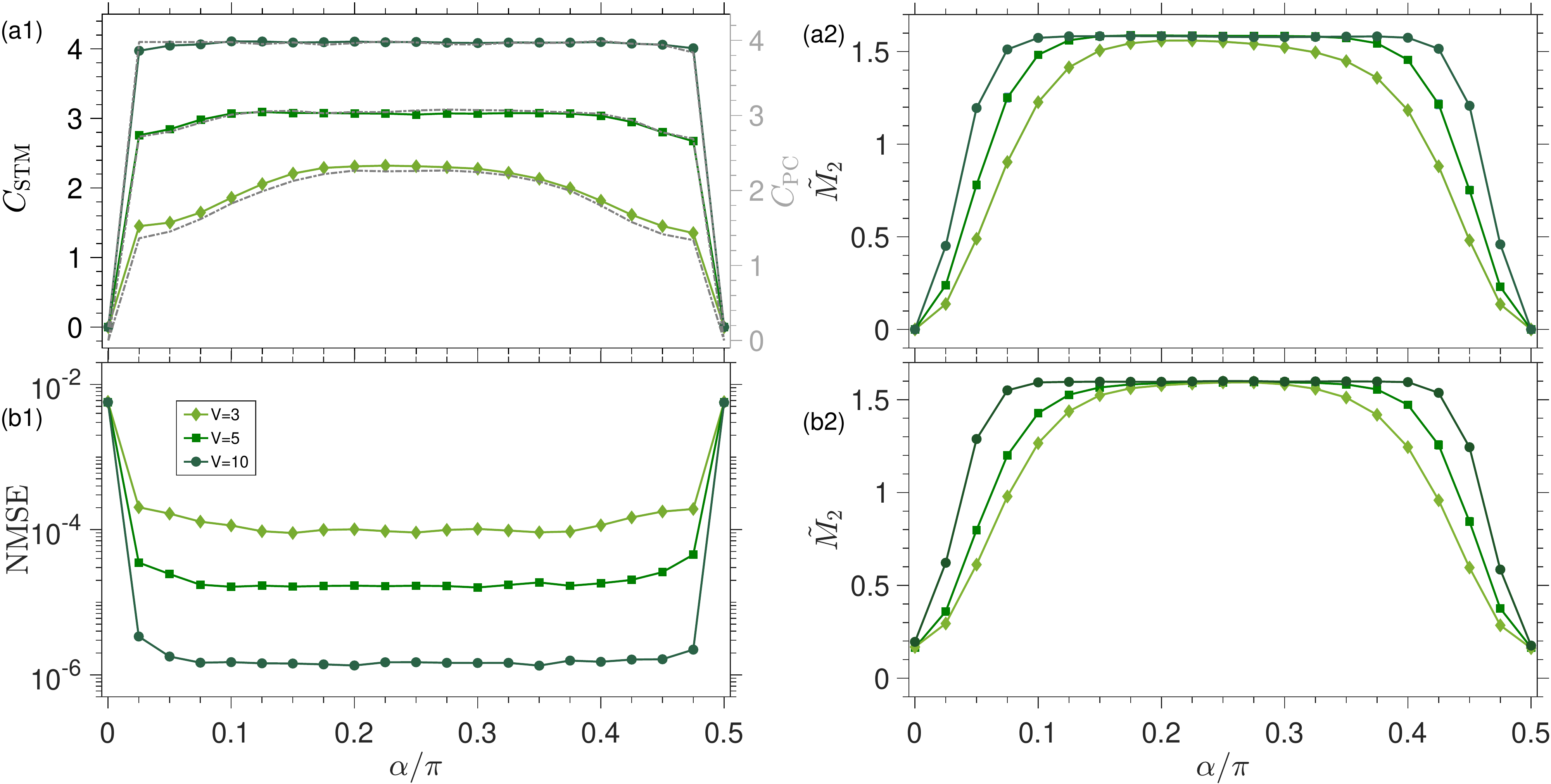}
\caption{
\textbf{Benchmark performance and dynamics-generated quantum magic under qubit resetting.} 
(a1) Short-term memory (STM) capacity $C_{\mathrm{STM}}$ (left axis) and total parity check (PC) capacity $C_{\mathrm{PC}}$ (right axis) as functions of the tuning angle $\alpha$ for reservoirs driven by the parameterized unitary family $U(\alpha)$. 
(a2)  
The corresponding 2-R\'enyi magic $\tilde{M}_2$ (time-averaged) generated during the STM and PC benchmarks.  
(b1) Normalized mean square error (NMSE) for the NARMA task as a function of $\alpha$. 
(b2) The 2-R\'enyi magic $\tilde{M}_2$ generated during the NARMA benchmark. 
For both the STM and PC tasks, the discrete input signal $u_k$ is drawn from an independent and identically distributed (i.i.d.) uniform binary sequence. For the NARMA task, the input drive is defined as $u_k = 0.2\left[\sin\!\left(\frac{2\pi \alpha_2 k}{T}\right)\sin\!\left(\frac{2\pi \beta_2 k}{T}\right)\sin\!\left(\frac{2\pi \gamma_2 k}{T}\right)+1\right]$, with temporal parameters $(\alpha_2,\beta_2,\gamma_2,T)=(2.11,3.73,4.11,100)$. The corresponding NARMA target-function hyperparameters are set to $(\alpha_1,\beta_1,\gamma_1,\delta,n)=(0.3,0.05,1.5,0.1,10)$. All simulations employ a system size of $N=6$ qubits and are ensemble-averaged over 200 independent random circuit realizations. Across all three computational benchmarks, the washout, training, and testing phases consist of 1000, 3000, and 1000 discrete time steps, respectively. Distinct curves across the panels denote different temporal multiplexing densities, parameterized by the number of virtual time slots $V\in\{3,5,10\}$.
}
\label{Fig_2}
\end{figure*}

We next investigate whether dynamics-generated magic is also reflected in task-level performance. We evaluate the qubit-resetting reservoir on three standard temporal benchmarks using the same tunable non-Clifford dynamics, where the evolution within one input interval is implemented as $U(\alpha)=R_z(\alpha)C_V\cdots R_z(\alpha)C_1$. Here, $V$ is the number of virtual time slots. The benchmarks include short-term memory (STM), which probes linear memory; parity check (PC), which probes nonlinear temporal processing; and NARMA, which combines memory with nonlinear cross-time dependence. The results are summarized in Fig.~\ref{Fig_2}. Across all three tasks, moving away from the Clifford point enhances performance: the STM and PC capacities increase, while the NARMA prediction error decreases. These improvements correlate with the build-up of dynamics-generated magic, supporting the interpretation that non-Clifford operator spreading is a key resource for effective feature generation in qubit-resetting QRC.

The dependence on temporal multiplexing is also consistent with this picture. Increasing the number of virtual nodes $V$ improves the benchmark scores by exposing a larger set of intermediate dynamical observables to the linear readout. Although the time-averaged magic can saturate for intermediate values of $\alpha$, the performance may continue to improve with $V$, because $V$ controls not only the amount of generated non-stabilizerness but also how efficiently the magic-enabled dynamics are sampled and extracted by the readout layer.

These numerical results corroborate the theoretical analysis above. In the qubit-resetting architecture, dynamics-generated magic promotes operator scrambling and distributes the reset-induced dissipation beyond the injection site, sustaining nontrivial history-dependent representations. The structural expressivity bound of Theorem~\ref{thm:expressivity_limit} nevertheless persists: the accessible nonlinear function class is constrained by the classical encoding map, not by the reservoir dynamics. This limitation motivates the Hamiltonian-encoding architecture studied next, where the input enters through the generator of the continuous-time evolution rather than through the destructive local qubit-resetting.

\section{Breaking the Bottleneck: QRC with Hamiltonian Encoding}
\label{sec:hamiltonian_encoding}

The rigorous operator-space analysis in Section~\ref{sec:magic_necessity} revealed a fundamental structural limitation of the qubit-resetting architecture: its nonlinear expressivity is strictly bounded by the number of reset input qubits ($m$), imposing an unavoidable structural trade-off between nonlinearity and temporal memory capacity. 
Furthermore, as established in Section~\ref{sec:magic_necessity}, effective qubit-resetting reservoirs rely on dynamics-generated magic to promote operator scrambling, distribute the local reset-induced dissipation, and support rich history-dependent state representations.

To achieve genuine, scalable quantum advantage in temporal processing, a QRC architecture must fundamentally bypass these mathematical bottlenecks. 
In this section, we introduce and rigorously analyze the Hamiltonian encoding architecture. Instead of destructive, local state replacements, this paradigm embeds the discretized temporal input sequence as piecewise-constant continuous drives directly into the generator of the open-system dynamics. We demonstrate that this approach elegantly decouples the ESP from the magic, as the ESP is natively enforced by the Liouvillian spectral gap and the magic arises from the continuous-time dynamics. 
Furthermore, we rigorously prove that provided the drive Hamiltonian is non-trivial (i.e., not proportional to the identity), this continuous embedding yields a transcendental function form that autonomously generates an infinite-order nonlinear response to the instantaneous input. Crucially, the intrinsic non-commutativity of the open-system generators strictly governs the temporal coupling of these nonlinearities, synthesizing a complex, non-separable processing of the input history that fundamentally overcomes the finite-degree expressivity limits of the qubit-resetting architecture.

\subsection{The Hamiltonian Encoding QRC Architecture}
\label{subsec:hamiltonian_framework}

In the Hamiltonian encoding scheme, the reservoir is modeled as an open quantum many-body system continuously interacting with a Markovian environment. The sequential input signal $u_k$ is encoded directly into the amplitudes of the global or local driving fields. Over the duration of the $k$-th time step, the continuous-time evolution of the system's density matrix $\rho(t)$ is naturally governed by the Lindblad master equation:
\begin{align}
    \frac{d\rho(t)}{dt} &= \mathcal{L}\big(u_k\big) \rho(t) \nonumber \\
    &= -i \big[H_0 + u_k H_1, \, \rho(t)\big] + \sum_\mu \gamma_\mu \mathcal{D}[L_\mu] \rho(t),
\label{eq:lindblad_master}
\end{align}
where $H_0$ is the unperturbed native Hamiltonian of the many-body system (capturing internal interactions), and $H_1$ is the drive Hamiltonian modulated by the classical temporal input $u_k$. The environmental coupling is described by the standard dissipators $\mathcal{D}[L_\mu]\rho = L_\mu \rho L_\mu^\dagger - \frac{1}{2}\{L_\mu^\dagger L_\mu, \rho\}$, with physical jump operators $L_\mu$ (e.g., spontaneous emission or dephasing) and decay rates $\gamma_\mu > 0$.

In the Pauli-Liouville representation, the density matrix $\rho(t)$ maps to a real vector $r(t) \in \mathbb{R}^{4^N}$, and the Lindbladian superoperator becomes a real matrix $\mathbf{L}(u_k) = \mathbf{L}_0 + u_k \mathbf{L}_1$, 
where $\mathbf{L}_0$ is the input-independent part of the generator and $\mathbf{L}_1$ couples the input linearly to the dynamics. As detailed in Appendix~\ref{app:hamiltonian_encoding_proof}, the evolution equation assumes a strictly affine generator form:
\begin{equation}
    \frac{dr(t)}{dt} = \Big( \mathbf{L}_0 + u_k \mathbf{L}_1 \Big) r(t).
\label{eq:liouville_diff_eq}
\end{equation}
In standard digital QRC protocols, the continuous input is held piecewise constant over a discrete injection interval $\Delta t$. For the $k$-th time step, the discrete update map emerges as the exact matrix exponential of the generator:
\begin{equation}
    r_k = \mathcal{E}(u_k) r_{k-1} = \exp\Big[ \Delta t (\mathbf{L}_0 + u_k \mathbf{L}_1) \Big] r_{k-1}.
\label{eq:hamiltonian_discrete_map}
\end{equation}
As demonstrated below, this formulation ensures that the Hamiltonian encoding architecture avoids the limitations inherent in the qubit-resetting architecture.

A primary theoretical advantage of the Hamiltonian encoding scheme lies in the purely physical origin of its ESP. In the qubit-resetting scheme, dissipation via the partial trace is entirely local to the input nodes. Consequently, the system must rely heavily on non-Clifford unitary scrambling (magic) to distribute this local dissipation across the operator space and erase initial-state information.

In stark contrast, the quantum reservoir dynamics in the Hamiltonian encoding scheme are described by a Markovian open quantum system governed by the instantaneous Lindbladian $\mathbf{L}(u_k) = \mathbf{L}_0 + u_k \mathbf{L}_1$. Physically, the unperturbed generator $\mathbf{L}_0$ characterizes the intrinsic many-body evolution and the relaxation channels coupling the system to its environment. Assuming these environmental interactions are sufficient to drive the unperturbed system to a unique steady state, $\mathbf{L}_0$ possesses a strictly positive Liouvillian spectral gap $\lambda_0 > 0$, defined by the magnitude of the largest nonzero real part among its non-trivial eigenvalues.

The temporal input sequence $\{u_k\}$ introduces a continuous deformation to the dynamics via the drive generator $u_k \mathbf{L}_1$. Provided that the drive amplitude is suitably bounded relative to $\lambda_0$, the instantaneous generator $\mathbf{L}(u_k)$ preserves the uniqueness of its steady state and maintains a uniform, strictly positive dissipative gap $\lambda_{\min} > 0$ across all possible input values. 
Under these rigorously defined physical conditions, the discrete-time update operator $\mathcal{E}(u_k) = \exp[\Delta t (\mathbf{L}_0 + u_k \mathbf{L}_1)]$ constitutes a strictly contractive CPTP map. For any two physical trajectories initialized from distinct states $\rho_0$ and $\tilde{\rho}_0$ but driven by the exact same input sequence, the trace distance $d_k = \frac{1}{2} \|\rho_k - \tilde{\rho}_k\|_1$ inherently satisfies $d_k \le \kappa \, d_{k-1}$, where the uniform contraction factor $\kappa < 1$ is bounded strictly away from unity due to the persistent spectral gap~\cite{Sannia2024Dissipation,chen2022nonlinear}. Iterating this relation yields an exponential suppression of the initial-state memory, mathematically guaranteeing that $\lim_{k \to \infty} d_k = 0$.

Therefore, in the Hamiltonian encoding paradigm, the ESP is strictly enforced by the native dissipative contraction of the open-system Liouvillian. This mechanism fundamentally decouples the emergence of the ESP from the requirement of dynamics-generated quantum magic, bypassing the intrinsic memory-nonlinearity trade-off that limits the qubit-resetting architecture.

\subsection{Expressivity of Hamiltonian Encoding}
\label{subsec:infinite_expressivity}

With the ESP guaranteed by the Liouvillian spectral gap, we now turn to expressive power. Because the input enters through the generator of the continuous-time evolution—whose terms need not commute—the resulting discrete-time map acquires a transcendental dependence on the instantaneous drive, producing nonlinear responses at all polynomial orders and circumventing the expressivity bound of Theorem~\ref{thm:expressivity_limit}. The following theorem makes this precise.

\begin{theorem}[Expressivity of Hamiltonian Encoding]
\label{thm:hamiltonian_pauli_expansion_main}
Consider a Hamiltonian-encoded reservoir governed by Eq.~\eqref{eq:lindblad_master}. Let $r_k \in \mathbb{R}^{4^N}$ denote the Pauli-Liouville state of the reservoir at discrete time step $k$, driven by the instantaneous scalar input $u_k \in \mathbb{R}$. Assume the drive Hamiltonian is non-trivial, 
i.e., $H_1 \neq c I$ for any real constant $c$.
Then, the one-step update map can be strictly expanded as:
\begin{equation}
r_{k} = \left( \sum_{m=0}^{\infty} C_m u_k^m \right) r_{k-1},
\label{eq:ham_map}
\end{equation}
where the coefficient matrices $C_m$ are uniquely determined by the reservoir's internal dynamics. Crucially, this power series does not 
terminate at any finite order $m$. The discrete-time update map thus constitutes a fundamentally transcendental matrix function, inherently equipping the system with an infinite-order nonlinear response to the instantaneous input $u_k$.
\end{theorem}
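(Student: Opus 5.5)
The plan is to treat $\mathcal{E}(u)=\exp[\Delta t(\mathbf{L}_0+u\mathbf{L}_1)]$ as a matrix-valued function of a \emph{complex} variable $u$. There are two parts. The existence and uniqueness of the expansion \eqref{eq:ham_map} is routine. The non-termination is the real content, and I will prove it by a growth-rate argument.

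\emph{Existence and uniqueness of the expansion.} The map $u\mapsto \Delta t(\mathbf{L}_0+u\mathbf{L}_1)$ is affine, and the matrix exponential is an entire function. Hence $\mathcal{E}(u)$ is an entire matrix-valued function of $u\in\mathbb{C}$, and its Taylor series at $u=0$ converges for all $u$. Expanding $\sum_n \frac{\Delta t^n}{n!}(\mathbf{L}_0+u\mathbf{L}_1)^n$ and collecting powers of $u$ gives
\begin{equation*}
C_m=\sum_{n\ge m}\frac{\Delta t^n}{n!}\sum_{\substack{w\in\{0,1\}^n\\ |w|_1=m}}\mathbf{L}_{w_1}\cdots\mathbf{L}_{w_n},
\end{equation*}
which is a sum over all orderings of the generators and so exhibits the role of non-commutativity. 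These $C_m$ depend only on $\mathbf{L}_0,\mathbf{L}_1,\Delta t$. They are unique because Taylor coefficients of an analytic function are unique, with $C_m=\mathcal{E}^{(m)}(0)/m!$. Absolute convergence follows from the bound $\|C_m\|\le \frac{(\Delta t\|\mathbf{L}_1\|)^m}{m!}e^{\Delta t\|\mathbf{L}_0\|}$ or a similar one.

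\emph{Non-termination.} Suppose for contradiction that $C_m=0$ for all $m>M$. Then $\mathcal{E}(u)$ is a matrix polynomial, so $\|\mathcal{E}(u)\|\le K(1+|u|)^M$ on all of $\mathbb{C}$, and in particular its spectral radius grows at most polynomially. To contradict this, I first examine $\mathbf{L}_1$. It is the Pauli-Liouville matrix of the superoperator $-i[H_1,\cdot\,]$, which is anti-Hermitian with respect to the Hilbert-Schmidt inner product, so $\mathbf{L}_1$ is real antisymmetric. Its eigenvalues are $i(\epsilon_a-\epsilon_b)$, where $\epsilon_a$ are the eigenvalues of $H_1$. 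Since $H_1\neq cI$, two eigenvalues of $H_1$ differ, so $\mathbf{L}_1$ has an eigenvalue $i\omega$ with $\omega\neq0$ (equivalently, $\mathbf{L}_1\neq 0$).

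Now take $u=-iy/\omega$ with $y\to+\infty$. Write $\mathbf{L}_0+u\mathbf{L}_1=u\,(\mathbf{L}_1+\mathbf{L}_0/u)$. By continuity of the spectrum under perturbation (an Ostrowski--Elsner bound, $|\lambda-\mu|\le C\|\mathbf{L}_0/u\|^{1/n}$ with $n=4^N$), the matrix $\mathbf{L}_1+\mathbf{L}_0/u$ has an eigenvalue $i\omega+o(1)$. Hence $\mathbf{L}_0+u\mathbf{L}_1$ has an eigenvalue $y+o(y)$, and $\mathcal{E}(u)$ has an eigenvalue of modulus at least $e^{\Delta t(y-o(y))}$. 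This exponential growth is incompatible with the polynomial bound, which gives the contradiction. Therefore infinitely many $C_m$ are nonzero, and $\mathcal{E}(u)$ is a transcendental (non-polynomial) entire function of $u$. Since a real-analytic function agreeing with a polynomial on an interval would agree everywhere, the infinite-order dependence persists on any physical input interval of real $u$.

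\emph{Main obstacle.} The delicate step is the eigenvalue asymptotics, i.e.\ controlling how the non-commuting $\mathbf{L}_0$ perturbs the spectrum of $u\mathbf{L}_1$. I would use the crude Hölder-type continuity bound, which needs no diagonalizability assumptions and gives an error of only $o(|u|)$; that is all the argument requires. A simpler alternative is to compare with the commuting case: for the special case $[\mathbf{L}_0,\mathbf{L}_1]=0$ one has $\mathcal{E}=e^{\Delta t\mathbf{L}_0}e^{u\Delta t\mathbf{L}_1}$, and non-termination follows directly from $\mathbf{L}_1$ being non-nilpotent. The perturbative argument is needed precisely to handle the general non-commuting case.
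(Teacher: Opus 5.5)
Your proposal is correct and follows the paper's argument in all essentials. The shared steps are: skew-symmetry of $\mathbf{L}_1$ gives an eigenvalue $i\omega\neq 0$; the rescaling $\mathbf{L}_0+u\mathbf{L}_1=u(\mathbf{L}_1+\mathbf{L}_0/u)$ together with eigenvalue perturbation makes the spectral radius of $\mathcal{E}(u)$ grow exponentially along the ray $u\propto -i\,\operatorname{sgn}(\omega)$; and this contradicts polynomial growth. The differences are minor. You contradict termination of the series directly, so you avoid the paper's appeal to Puiseux's theorem, which it uses to rule out the formally stronger ``algebraic'' alternative (equivalent here, since $\mathcal{E}$ is entire). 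Your Ostrowski--Elsner bound stands in for the paper's first-order estimate that $\mathbf{L}_1+\mathbf{L}_0/u$ has an eigenvalue $i\omega+O(1/|u|)$, which is also valid because the eigenvalues of the skew-symmetric $\mathbf{L}_1$ are semisimple.
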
 

\textit{Proof.} 
The rigorous proof, detailed in Appendix~\ref{app:hamiltonian_encoding_proof}, proceeds by contradiction. First, by mapping the Lindblad dynamics into the Pauli-Liouville space, we demonstrate that the discrete one-step update is governed by a matrix exponential whose generator depends strictly linearly on the input $u_k$ [Eq.~\eqref{eq:hamiltonian_discrete_map}]. The non-triviality condition ($H_1 \neq c I$) guarantees that the input generator $\mathbf{L}_1$ possesses at least one nonzero eigenvalue. 
Then, complexifying $u_k$ to $z\in\mathbb{C}$ shows that the analytic continuation of the matrix exponential $e^{\Delta t(\mathbf{L}_0+z\mathbf{L}_1)}$ has exponential spectral-norm growth along a suitable complex ray. In contrast, if the update map were an algebraic matrix function, Puiseux's theorem~\cite{brieskorn1986plane} dictates that it could be expanded near infinity as a convergent fractional power series.
Consequently, along the same ray, its norm would grow at most as a finite power of $u_k$. 
This profound discrepancy in asymptotic growth yields a mathematical contradiction, establishing the update map as a transcendental function of the input and rigorously precluding its power series expansion from terminating at any finite order. \hfill $\blacksquare$

This theorem formally identifies the essential structural advantage of Hamiltonian encoding: by embedding the input directly into the dynamical generator, the reservoir completely evades the finite-order nonlinear ceiling. Consequently, the accessible input-output mapping is not restricted \textit{a priori} to any predetermined finite-dimensional polynomial family. In this sense, Hamiltonian encoding natively supports a qualitatively richer hierarchy of complex temporal features.

This stands in stark contrast to the qubit-resetting architecture. In that paradigm, the quantum dynamics merely execute a linear mixing of the nonlinear features already injected by the classical encoding step, strictly confining the accessible hypothesis class to a finite-dimensional polynomial space. Therefore, one can always construct a purely classical model whose function class fully encapsulates that of the qubit-resetting reservoir. No analogous finite-dimensional function-space bound exists for Hamiltonian encoding, whose nonlinear hierarchy is not limited by any fixed polynomial degree.
Although this unbounded expressivity does not autonomously guarantee classical intractability for every specific computational instance, it rigorously proves that the specific structural bottleneck crippling the qubit-resetting scheme is fundamentally absent. Hamiltonian encoding thus provides a principled, mathematically sound pathway toward achieving genuine function-class advantage by accessing a substantially larger and more complex family of temporal mappings. 

While Theorem~\ref{thm:hamiltonian_pauli_expansion_main} rigorously establishes an infinite-order nonlinear response with respect to the instantaneous input $u_k$, it is crucial to distinguish this single-step property from the reservoir's broader capacity to process temporal sequences. The unbounded instantaneous expressivity stems solely from the non-triviality of the drive ($\mathbf{L}_1 \neq \mathbf{0}$). However, the genuine computational power of a temporal filter relies on its ability to generate complex, non-separable cross-terms between inputs injected at different time steps (e.g., $u_k^m u_{k-1}^n$). This cross-temporal coupling is governed entirely by the intrinsic non-commutativity of the open-system generators, $[\mathbf{L}_0, \mathbf{L}_1] \neq 0$. 
To rigorously illustrate this, consider the counterfactual scenario where the intrinsic dynamics and the drive commute, $[\mathbf{L}_0, \mathbf{L}_1] = 0$. The discrete-time evolution over multiple steps would analytically factorize:
\begin{equation}
    r_k = \exp\left[k\Delta t \mathbf{L}_0 \right] \exp\left[ \Delta t \left( \sum_{j=1}^k u_j \right) \mathbf{L}_1 \right] r_0.
\end{equation}
In this commuting limit, the reservoir's memory reduces to a trivial linear accumulation of past inputs. Although the mathematical dependence on the aggregate sum $\sum u_j$ remains infinite-order, the system becomes structurally incapable of generating distinct multiplicative cross-terms between different time steps. This would severely cripple its expressivity, rendering it unable to differentiate complex temporal patterns.

In reality, the physical generators of the driven many-body system strictly do not commute ($[\mathbf{L}_0, \mathbf{L}_1] \neq 0$). Consequently, the multi-step propagator cannot be simply factorized. Expanding the time-ordered product $\prod_{j} \exp[\Delta t(\mathbf{L}_0 + u_j \mathbf{L}_1)]$---via the Baker-Campbell-Hausdorff formula or the Dyson series---generates an infinite hierarchy of nested commutators, such as $[\mathbf{L}_0, \mathbf{L}_1]$ and $[\mathbf{L}_1, [\mathbf{L}_0, \mathbf{L}_1]]$. Physically, these non-vanishing commutators act as an essential dynamical mixing mechanism. They ensure that the transcendental features generated at previous time steps are highly scrambled by the intrinsic Hamiltonian and dissipative evolution ($\mathbf{L}_0$) before being coupled to the fresh nonlinearities injected by the subsequent drive ($u_k \mathbf{L}_1$). As a result, Hamiltonian encoding uniquely exploits non-commutativity to autonomously synthesize a highly complex, non-separable temporal hierarchy, equipping the reservoir with the genuine cross-temporal nonlinearities required for advanced temporal processing.

\subsection{Quantifying Expressive Hierarchy via IPC}
\label{subsec:information_processing_capacity}

Theorem~\ref{thm:hamiltonian_pauli_expansion_main} formally establishes that Hamiltonian encoding overcomes the structural bottleneck of qubit resetting, liberating the reservoir's input dependence from any fixed, finite polynomial order. The benchmark tasks evaluated above—STM, PC, and NARMA—provide essential task-level diagnostics, demonstrating how reservoir performance depends on memory retention, nonlinear temporal processing, and dynamics-generated quantum magic in concrete learning scenarios. However, these benchmarks probe selected target functions, and their specific numerical scores naturally depend on task-specific parameters such as the target definition, delay window, input distribution, and relevant hyperparameters.

We complement these task-level benchmarks with the information processing capacity (IPC) framework~\cite{Dambre2012Information}, which provides a task-independent metric of the reservoir's nonlinear temporal processing ability. The IPC quantifies the reservoir's ability to reconstruct a complete orthogonal set of functions of past inputs, making it a natural empirical counterpart to Theorem~\ref{thm:hamiltonian_pauli_expansion_main}: Hamiltonian encoding should access a strictly broader hierarchy of temporal features than qubit resetting, not merely outperform it on individual tasks.

Following the standard IPC formalism, the target functions are chosen as a complete orthogonal set of Legendre-polynomial functions,
\begin{equation}
    \overline{y}_k = \prod_i P_{d_i}\!\left(\tilde{u}_{k-\tau_i}\right),
    \label{eq:legendre}
\end{equation}
where $\tilde{u}_k$ is sampled uniformly from the interval $[-1,1]$, and $P_{d_i}$ denotes the normalized Legendre polynomial of degree $d_i$. The total degree of the target function is defined as $d = \sum_i d_i$. 
For convenience, the physical input variable driving the reservoir dynamics is rescaled to $u_k = (\tilde{u}_k+1)/2$, ensuring $u_k \in [0,1]$. For a given target sequence $\overline{y}_k$, the corresponding IPC is defined as:
\begin{equation}
    C_{\mathrm{IPC}}(\overline{y}_k) = 1 - \frac{\frac{1}{K}\sum_{k=1}^{K}\left(y_k-\overline{y}_k\right)^2}{\frac{1}{K}\sum_{k=1}^{K}\left(\overline{y}_k\right)^2},
\end{equation}
where $y_k$ is the output produced by the reservoir readout and $K$ is the length of the evaluation sequence. This metric precisely quantifies the reservoir's reconstruction accuracy, with $C_{\mathrm{IPC}}=1$ indicating perfect prediction.

For each total degree $d$, the degree-resolved total IPC reported in Fig.~\ref{Fig_3} is defined as
\begin{equation}
C_{\mathrm{TOT}}(d)
=
\sum_{\overline{y}_k\in\mathcal{Y}^{\ast}_d}
C_{\mathrm{IPC}}(\overline{y}_k).
\end{equation}
Here, $\mathcal{Y}^{\ast}_d$ denotes the degree-$d$ subset of Legendre target functions, of the form in Eq.~\ref{eq:legendre}, that is used for the IPC evaluation in Fig.~\ref{Fig_3}. For $d=1$, it contains the standard linear memory targets. For $d>1$, we include only target functions with at least one genuinely nonlinear factor, namely at least one $d_i>1$. For example, at $d=2$, the full Legendre basis contains both multilinear memory terms such as
$P_{1}\!\left(\tilde{u}_{k-\tau_1}\right)P_{1}\!\left(\tilde{u}_{k-\tau_2}\right)$
and genuine nonlinear terms such as
$P_{2}\!\left(\tilde{u}_{k-\tau}\right)$.
The former can arise from linear memory mixing across different time steps, whereas the latter requires nonlinear dependence on an individual delayed input. This filtering isolates the nonlinear response that cannot be generated by a linearly encoded qubit-resetting reservoir.

\begin{figure}[tp]
\centering
\includegraphics[width=0.48\textwidth]{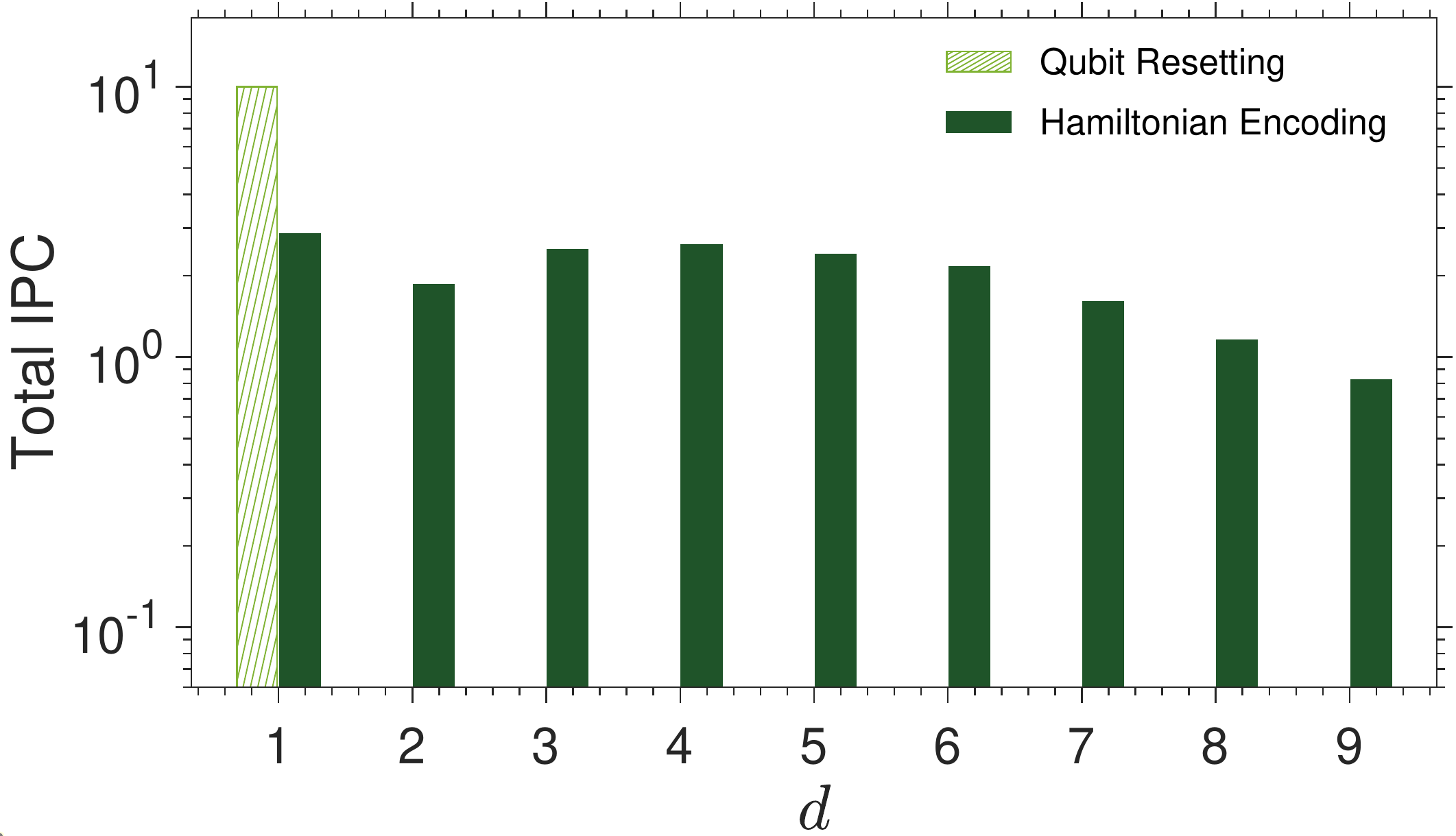}
\caption{
\textbf{
Degree-resolved IPC for qubit resetting and Hamiltonian encoding.} 
For each total degree $d$, the hatched and solid bars denote the total IPC obtained using the qubit-resetting and Hamiltonian-encoding architectures, respectively. Crucially, for data at $d>1$, only target functions containing at least one factor with degree $d_i>1$ are included (see text). This rigorously isolates the intrinsic single-time nonlinear response from mere linear memory mixing. The simulations are governed by the intrinsic many-body Hamiltonian $H_0$ defined in Eq.~\eqref{eq:H0}, utilizing a reservoir size of $N=5$ qubits, $V=10$ virtual time slots, and physical parameters $J=1$, $h_z=1$, $\Delta t=10$, and $\gamma=0.1$. For both encoding architectures, the reported IPC values are ensemble-averaged over 200 independent disorder realizations of the random couplings $J_{ij}$. The washout, training, and testing phases consist of 1000, 3000, and 100 discrete time steps, respectively. 
}
\label{Fig_3}
\end{figure}

To objectively compare the two encoding paradigms, we utilize a standard quantum many-body spin model. The intrinsic unperturbed Hamiltonian is given by~\cite{Fujii2017Harnessing,Sannia2024Dissipation}:
\begin{equation}
    H_0 = \sum_{i < j}^{N} J_{ij}\, \sigma_i^x \sigma_j^x + h_z \sum_{i=1}^{N} \sigma_i^z , 
    \label{eq:H0}
\end{equation}
where the indices label the $N$ reservoir spins, $h_z$ represents the transverse field, and the pairwise couplings $J_{ij}$ are drawn independently from a uniform distribution over $[-J/2, J/2]$.

For the qubit-resetting architecture, while conventional approaches often employ the nonlinear state encoding described in Eq.~\eqref{eq:state_encoding}, our objective here is to strictly isolate the intrinsic nonlinearity generated by the reservoir dynamics. Therefore, we deliberately adopt a strictly linear input encoding: $\rho^{(1)}_{\rm in}(u_k) = \frac{1}{2}(I + u_k Z)$. This guarantees that any higher-order nonlinear response observed in the IPC evaluation originates entirely from the many-body quantum dynamics, rather than being trivially inherited from the classical encoding map. 
For the Hamiltonian encoding architecture, the temporal input directly modulates a driving term $H_{1}(u_k) = h_x(u_k)\sum_{i=1}^{N}\sigma_i^x$, where the amplitude is linearly mapped as $h_x(u_k) = (u_k+1)h_z$. The system's dissipation is governed by local jump operators $L_i = \sigma_i^- = (\sigma_i^x - i\sigma_i^y)/2$, with a uniform decay rate $\gamma_i = \gamma$ applied across all spins. The input drive is held constant over a fixed interval $\Delta t$, enabling a direct, normalized comparison between the discrete-resetting and continuous-driving paradigms.

The evaluated degree-resolved IPC distributions are presented in Fig.~\ref{Fig_3}. Under strictly linear input encoding, the qubit-resetting reservoir has capacity only in the linear sector, while its filtered higher-degree IPC vanishes. In contrast, the Hamiltonian-encoded reservoir retains nonzero capacity across multiple higher degrees, up to $d=9$ in our simulations. This agrees with Theorem~\ref{thm:expressivity_limit} and Theorem~\ref{thm:hamiltonian_pauli_expansion_main}: qubit resetting only linearly mixes the injected input features, whereas Hamiltonian encoding generates intrinsic higher-order nonlinear responses through the input-dependent continuous-time generator. Therefore, the IPC hierarchy in Fig.~\ref{Fig_3} directly highlights the expressivity advantage of Hamiltonian encoding over qubit resetting.

\section{Discussion and Conclusion}
\label{sec:conclusion}

In this work, we have developed a mathematical framework to analyze the fundamental capabilities and limitations of QRC. By mapping the dissipative quantum dynamics into the Pauli-Liouville operator space, we resolved the heuristic ambiguity surrounding the roles of quantum magic, the ESP, and nonlinear expressivity. Our findings establish essential theoretical principles for the design and evaluation of temporal quantum machine learning architectures.

Our theoretical analysis exposes a fundamental dichotomy between two distinct input-encoding paradigms. For the widely utilized qubit-resetting scheme, we rigorously proved a ``no-go'' theorem: its mathematical isomorphism to a classical state-affine system bounds its nonlinear expressive power strictly by the classical injection dimension ($m$). In this architecture, unitary quantum dynamics act merely as linear mixing coefficients. While dynamics-generated quantum magic is essential to drive genuine operator scrambling and sustain a function input-history memory, it cannot elevate the nonlinear polynomial degree of the output. Consequently, the qubit-resetting paradigm suffers from an unavoidable structural trade-off: achieving higher nonlinearity explicitly demands sacrificing temporal memory capacity. More importantly, the theorem implies that this architecture is classically simulable at the level of input–output function classes and fundamentally incapable of yielding genuine function advantage in temporal processing.

To overcome this structural bottleneck, we rigorously analyzed the Hamiltonian encoding scheme. By embedding the temporal input directly into the generator of the open quantum system's dynamics, this architecture entirely decouples the ESP from the generation of quantum magic. The ESP is instead natively guaranteed by the Liouvillian spectral gap of the dissipative environment. Crucially, by embedding the signal directly into the matrix exponential, the discrete-time update map assumes a fundamentally transcendental function form. This autonomously generates an unbounded hierarchy of nonlinear response terms, ensuring an infinite-order nonlinear dependence on the instantaneous drive amplitude. Furthermore, the intrinsic non-commutativity of the open-system generators strictly governs the temporal coupling of these nonlinearities across different time steps, synthesizing a complex, non-separable processing of the input history. 

The Hamiltonian encoding scheme can be implemented with both analogue quantum simulators and digital quantum circuits. On the analogue devices, the input signal is injected through time-dependent Hamiltonian engineering. On fault-tolerant digital platforms, the continuous-time Hamiltonian evolution is implemented via Trotterization. Unlike digital quantum simulation, which demands small Trotter steps to suppress approximation error, QRC requires only that the Trotterized circuit preserve the non-commutativity between input-dependent and input-independent generator terms—making the scheme inherently more robust to coarse digitization.

In summary, we have established a rigorous framework for identifying when quantum reservoir computing offers a genuine computational advantage for temporal information processing. Qubit-resetting encoding is classically simulable at the level of input–output function classes regardless of whether magic is present in the input, the dynamics, or both: the expressivity bottleneck resides in the encoding map, not in the reservoir. Hamiltonian encoding circumvents this bottleneck—non-commutativity between the input-dependent and input-independent terms of the generator is the essential resource that grants access to function classes beyond classical state-affine systems. 

More generally, these results supply a concrete criterion for distinguishing quantum machine learning protocols that are classically simulable from those that are not—a prerequisite for any rigorous claim of quantum advantage in learning tasks.  Architectures whose input-output maps are described by finite-degree polynomials are necessarily classically simulable at the level of function classes, regardless of the quantum resources employed internally. 
By contrast, the Hamiltonian-encoded map cannot, in general, be evaluated efficiently classically unless $\mathsf{BQP}=\mathsf{BPP}$, since the quantum reservoir dynamics in Eq.~\eqref{eq:hamiltonian_discrete_map}, under a general Hamiltonian encoding scheme, can encode any problem in $\mathsf{BQP}$, as shown in Appendix~\ref{app:BQP}. 
This dichotomy suggests a broader principle for dissipative quantum machine learning: representational advantage should be diagnosed not by the size of the Hilbert space, but by the complexity of the encoding-induced input-output map.

\section*{Acknowledgements}
This work is supported by National Program on Key Basic Research Project of China (Grant No. 2021YFA1400900), the Innovation Program for Quantum Science and Technology of China (Grant No. 2024ZD0300100), Shanghai Municipal Science and Technology Major Project (Grant No. 26LZ0500100, 25TQ003, 24LZ1400900, 24LZ1401600, 24DP2600100). W.X. acknowledges support from the National Natural Science Foundation of China/Hong Kong RGC Collaborative Research Scheme (Project CRS CUHK401/22) and New Cornerstone Science Foundation, Hong Kong RGC Senior Research Fellow Scheme, Ref. SRFS2223-4S01.  

\bibliography{references}

\newpage
\appendix

\section{Derivation of the Qubit-Resetting Map in Liouville Space}
\label{app:reset_derivation}

In this appendix, we provide a detailed derivation of the linear resetting matrix $\mathcal{R}(u_k)$ introduced in Section~\ref{subsec:sas_reduction}. Our goal is to rigorously map the physical resetting operation onto a direct matrix-vector multiplication within the Pauli-Liouville operator space.

\subsection{Lexicographical Ordering of the Pauli Basis}
To systematically analyze the operation, we first establish a lexicographical ordering for the $N$-qubit Pauli strings $B^N_i \in \mathcal{P}_N$. We order the single-qubit Pauli matrices as $I < X < Y < Z$ (indexed as $0, 1, 2, 3$, respectively) and compare the tensor factors from left to right, treating the first qubit as the most significant tensor factor. 
Under this strict ordering, any $N$-qubit Pauli string $B^N_i$ can be uniquely factorized into the tensor product of an operator acting on the first qubit and an operator acting on the remaining $N-1$ qubits:
\begin{equation}
    B^N_i = \sigma_{p(i)} \otimes B^{N-1}_{q(i)}, \quad i \in [4^N]_0,
\end{equation}
where the indices are uniquely determined by the quotient and remainder of $c = 4^{N-1}$: 
\begin{equation}
    p(i) := \lfloor i/c \rfloor \in [3]_0, \quad q(i) := i \bmod c \in [c]_0.
\end{equation}

\subsection{Evaluating the Post-Reset State Components}
Let $r_{k-1}$ denote the Pauli-Liouville state vector prior to the resetting operation, and $r'_k$ denote the state vector immediately after (i.e., the post-reset state). According to the definition of the Pauli-Liouville representation [Eq.~\eqref{eq:liouville_state}], the $i$-th component of the updated state is given by:
\begin{equation}
    r'_{k,i} = \frac{1}{\sqrt{2^N}} \Tr\left[ B^N_i \Big( \rho^{(1)}_{\mathrm{in}}(u_k) \otimes \Tr_1[\rho_{k-1}] \Big) \right].
\end{equation}
Substituting the factorized form of the Pauli string $B^N_i = \sigma_{p(i)} \otimes B^{N-1}_{q(i)}$, and utilizing the property that the trace is strictly multiplicative over tensor products, we can decouple the operation into the two distinct subsystems:
\begin{equation}
    r'_{k,i} = \frac{1}{\sqrt{2^N}} \Tr\left[ \sigma_{p(i)} \rho^{(1)}_{\mathrm{in}}(u_k) \right] \cdot \Tr\left[ B^{N-1}_{q(i)} \Tr_1[\rho_{k-1}] \right].
\label{eq:app_trace_split}
\end{equation} 

\subsection{The Partial Trace Identity}
To evaluate the second term in Eq.~\eqref{eq:app_trace_split}, we invoke a fundamental property of the partial trace. For any bipartite state $\rho_{12}$ and an arbitrary operator $B$ acting solely on subsystem 2, the following identity holds:
\begin{equation}
    \Tr \Big[ B \cdot \Tr_1[\rho_{12}] \Big] = \Tr \Big[ (I_1 \otimes B) \rho_{12} \Big].
\end{equation}
Identifying the reset qubit as subsystem 1 and the remaining $N-1$ qubits as subsystem 2, we apply this identity to elevate the trace over the reduced $(N-1)$-qubit subspace to a trace over the full $N$-qubit Hilbert space:
\begin{equation}
    \Tr\left[ B^{N-1}_{q(i)} \Tr_1[\rho_{k-1}] \right] = \Tr \left[ (I \otimes B^{N-1}_{q(i)}) \rho_{k-1} \right].
\end{equation}
Crucially, by virtue of our chosen lexicographical ordering, the tensor product $I \otimes B^{N-1}_{q(i)}$ corresponds exactly to the $N$-qubit Pauli string $B^N_{q(i)}$. This equivalence allows us to directly map this term back to the Pauli-Liouville components of the pre-reset state vector $r_{k-1}$:
\begin{equation}
    \frac{1}{\sqrt{2^N}} \Tr \left[ B^N_{q(i)} \rho_{k-1} \right] = r_{k-1,q(i)}.
\end{equation} 

\subsection{Evaluating the Input Qubit Trace}
We now evaluate the first term in Eq.~\eqref{eq:app_trace_split}. The state of the newly injected input qubit is parameterized by its generalized Bloch vector $\bm{s}_k(u_k) = (1, s_{k,1}, s_{k,2}, s_{k,3})$:
\begin{equation}
    \rho^{(1)}_{\mathrm{in}}(u_k) = \frac{1}{2} \sum_{\mu=0}^{3} s_{k,\mu} \sigma_\mu,
\end{equation}
where $\sigma_0 \equiv I$. Because the Pauli matrices are orthogonal under the trace inner product ($\Tr[\sigma_\mu \sigma_\nu] = 2\delta_{\mu\nu}$), tracing this input state against the single-qubit Pauli operator $\sigma_{p(i)}$ simply extracts the corresponding component of the input vector: $\Tr [ \sigma_{p(i)} \rho^{(1)}_{\mathrm{in}}(u_k) ] = s_{k, p(i)}$.

\subsection{Constructing the Matrix Representation}
Substituting these evaluated components back into Eq.~\eqref{eq:app_trace_split}, the state update rule simplifies remarkably to a component-wise scalar multiplication: $r'_{k,i} = s_{k, p(i)} r_{k-1, q(i)}$. 
We can now translate this component-wise mapping into the global matrix transformation $\mathcal{R}(u_k)$. Recall that $p(i) \in \{0, 1, 2, 3\}$ indicates which macro-block of size $c$ the index $i$ belongs to, and $q(i) \in \{0, \dots, c-1\}$ serves as the local index within that block. 

Because $q(i)$ maps exactly to the first $c$ components of the vector $r_{k-1}$ (the sector where the first qubit is $I$), the updated state component $r'_{k,i}$ depends \textit{exclusively} on the first $c$ components of $r_{k-1}$. The remaining $3c$ components of $r_{k-1}$ (where the first qubit is $X, Y$, or $Z$) are entirely annihilated by the partial trace. The coefficient $s_{k, p(i)}$ is uniformly applied across each respective block $p$. Thus, writing $r'_k = \mathcal{R}(u_k) r_{k-1}$, the matrix $\mathcal{R}(u_k)$ must take the sparse block-column form:
\begin{equation}
    \mathcal{R}(u_k) = \begin{pmatrix}
     I_c & \bm{0} & \bm{0} & \bm{0} \\
     s_{k,1} I_c & \bm{0} & \bm{0} & \bm{0} \\
     s_{k,2} I_c & \bm{0} & \bm{0} & \bm{0} \\
     s_{k,3} I_c & \bm{0} & \bm{0} & \bm{0}
    \end{pmatrix}, \label{app:resetting_matrix}
\end{equation}
where $I_c$ is the $c \times c$ identity matrix. This formally completes the derivation of Eq.~\eqref{eq:reset_matrix} and mathematically proves that qubit resetting acts as a structurally constrained, linear broadcast map in Liouville space.

\section{Derivation of the State-Affine System Reduction}
\label{app:sas_derivation}

In this appendix, we provide a detailed derivation of the exact mathematical isomorphism between the quantum qubit-resetting protocol and the classical SAS introduced in Section~\ref{subsec:sas_reduction}. 
Part of the calculations in this section are inspired by previous work ~\cite{pena2023quantum}. 
Our objective is to analytically reduce the full $4^N$-dimensional Pauli-Liouville dynamics into a compact, closed-loop recurrence relation that governs the $N-1$ un-reset qubits.

\subsection{Block Decomposition of the Operator Space}
We begin with the full state update equation describing a single QRC time step:
\begin{equation}
    r_k = \mathcal{O} \mathcal{R}(u_k) r_{k-1},
\label{eq:app_full_update}
\end{equation}
where $r_k \in \mathbb{R}^{4^N}$ is the Pauli-Liouville state vector, $\mathcal{R}(u_k)$ represents the resetting of the first qubit, and $\mathcal{O} \in \mathrm{SO}(4^N)$ is the orthogonal matrix representing the unitary evolution $U$ with $\mathcal{O}_{ij} = 2^{-N} \Tr[ B^N_i U B^N_j U^\dagger ]$.

To systematically analyze how information flows, we partition the $4^N$-dimensional operator space into four blocks of size $c = 4^{N-1}$. We decompose the state vector $r_k$ as:
\begin{equation}
    r_k = 
    \begin{pmatrix}
        r_k^0 \\
        r_k^1 \\
        r_k^2 \\
        r_k^3
    \end{pmatrix},
\end{equation}
where each sub-vector $r_k^\mu \in \mathbb{R}^c$ corresponds to Pauli strings whose first operator is $\sigma_\mu$ (with $\sigma_0 \equiv I$, $\sigma_1 \equiv X$, $\sigma_2 \equiv Y$, $\sigma_3 \equiv Z$). Thus, $r_k^0$ represents the degrees of freedom strictly local on the $N-1$ un-reset qubits.

Similarly, we partition the $4^N \times 4^N$ unitary evolution matrix $\mathcal{O}$ into sixteen $c \times c$ sub-matrices:
\begin{equation}
    \mathcal{O} =   
    \begin{pmatrix}
        \mathcal{O}^{00} & \mathcal{O}^{01} & \mathcal{O}^{02} & \mathcal{O}^{03} \\
        \mathcal{O}^{10} & \mathcal{O}^{11} & \mathcal{O}^{12} & \mathcal{O}^{13} \\
        \mathcal{O}^{20} & \mathcal{O}^{21} & \mathcal{O}^{22} & \mathcal{O}^{23} \\
        \mathcal{O}^{30} & \mathcal{O}^{31} & \mathcal{O}^{32} & \mathcal{O}^{33}
    \end{pmatrix}.
\end{equation}

\subsection{Extracting the Closed-Loop Recurrence}
We now apply the state update in two stages. First, we apply the resetting matrix $\mathcal{R}(u_k)$ to the previous state $r_{k-1}$. Utilizing the sparse block-column structure of $\mathcal{R}(u_k)$ derived in Appendix~\ref{app:reset_derivation}, the intermediate post-reset state $r'_k$ becomes:
\begin{align}
    r'_k &= \mathcal{R}(u_k) r_{k-1} \nonumber \\
    &= \begin{pmatrix}
     I_c & \bm{0} & \bm{0} & \bm{0} \\
     s_{k,1} I_c & \bm{0} & \bm{0} & \bm{0} \\
     s_{k,2} I_c & \bm{0} & \bm{0} & \bm{0} \\
     s_{k,3} I_c & \bm{0} & \bm{0} & \bm{0}
    \end{pmatrix}
    \begin{pmatrix}
        r_{k-1}^0 \\
        r_{k-1}^1 \\
        r_{k-1}^2 \\
        r_{k-1}^3
    \end{pmatrix} 
    = \begin{pmatrix}
        r_{k-1}^0 \\
        s_{k,1} r_{k-1}^0 \\
        s_{k,2} r_{k-1}^0 \\
        s_{k,3} r_{k-1}^0
    \end{pmatrix}. \label{eq:app_reset_matrix_step}
\end{align}
This explicitly demonstrates that the resetting operation completely annihilates the information in sectors $\mu \in \{1, 2, 3\}$ of the previous state, overwriting them with copies of the un-reset sector $r_{k-1}^0$ weighted by the input components $s_{k,\mu}$.

Next, we apply the unitary evolution matrix $\mathcal{O}$ to this intermediate state to obtain $r_k = \mathcal{O} r'_k$. Because the \textit{subsequent} time step ($k+1$) will again strictly discard all components except $r_k^0$, we only need to calculate the update for the top block $r_k^0$. Multiplying the top row of blocks in $\mathcal{O}$ by the column vector $r'_k$, we obtain:
\begin{align}
    r_k^0 &= \mathcal{O}^{00} r_{k-1}^0 + \mathcal{O}^{01} (s_{k,1} r_{k-1}^0) \nonumber \\
    &\quad + \mathcal{O}^{02} (s_{k,2} r_{k-1}^0) + \mathcal{O}^{03} (s_{k,3} r_{k-1}^0) \nonumber \\
    &= \left( \mathcal{O}^{00} + \sum_{\mu=1}^3 s_{k,\mu} \mathcal{O}^{0\mu} \right) r_{k-1}^0.
\end{align}
Defining $s_{k,0} \equiv 1$, this condenses to a self-contained recurrence relation:
\begin{equation}
    r_k^0 = \left( \sum_{\mu=0}^3 s_{k,\mu} \mathcal{O}^{0\mu} \right) r_{k-1}^0.
\label{eq:app_homogeneous}
\end{equation}
Since $r_k^0$ relies exclusively on its own previous value $r_{k-1}^0$, we have successfully traced out the auxiliary degrees of freedom, creating a closed dynamical loop. To match the notation in the main text, we rename this effective memory vector $x_k := r_k^0$.

\subsection{Isolating the Invariant Trace Component}

Equation~\eqref{eq:app_homogeneous} is a homogeneous linear recurrence. However, we must explicitly account for the physical constraint that valid quantum states possess a unit trace. 
In the Pauli-Liouville representation, the first component of $x_k$ (corresponding to the global identity operator $I^{\otimes N}$) encodes the trace. Because the CPTP map preserves the trace, this component is permanently fixed at $1/\sqrt{2^N}$. We isolate this scalar from the information-carrying (traceless) components by decomposing $x_k$ as:
\begin{equation}
    x_k = \begin{pmatrix} 1/\sqrt{2^N} \\ \tilde{x}_k \end{pmatrix},
\end{equation}
where $\tilde{x}_k \in \mathbb{R}^{c-1}$ represents the expectation values of all non-identity observables on the $N-1$ un-reset qubits.

Because unitary evolution is trace-preserving, the first row of the full matrix $\mathcal{O}$ must strictly be $(1, 0, 0, \dots)$. Furthermore, because unitary evolution is unital (it preserves the identity operator), its first column must be $(1, 0, 0, \dots)^T$. 
Consequently, the top-left element of $\mathcal{O}^{00}$ is exactly 1, and the remainder of its first row and first column are entirely zeros. For $\mu \in \{1, 2, 3\}$, the trace-preserving property dictates that the entire first row of $\mathcal{O}^{0\mu}$ is zero. This imposes the following rigorous block structures on the $c \times c$ sub-matrices:
\begin{equation}
    \mathcal{O}^{00} = \begin{pmatrix} 1 & \bm{0}^T \\ \bm{0} & \tilde{\mathcal{O}}^{00} \end{pmatrix}, \quad 
    \mathcal{O}^{0\mu} = \begin{pmatrix} 0 & \bm{0}^T \\ \tilde{\bm{b}}^\mu & \tilde{\mathcal{O}}^{0\mu} \end{pmatrix},
\end{equation}
where $\tilde{\mathcal{O}}^{0\mu}$ are $(c-1) \times (c-1)$ matrices, and $\tilde{\bm{b}}^\mu$ are $(c-1)$-dimensional column vectors.

Substituting these decomposed forms back into Eq.~\eqref{eq:app_homogeneous}, we can seamlessly separate the trivial trace update from the dynamical update of the traceless components: 
\begin{widetext}
\begin{align}
    \begin{pmatrix} 1/\sqrt{2^N} \\ \tilde{x}_k \end{pmatrix} &= \left[ \begin{pmatrix} 1 & \bm{0}^T \\ \bm{0} & \tilde{\mathcal{O}}^{00} \end{pmatrix} + \sum_{\mu=1}^3 s_{k,\mu} \begin{pmatrix} 0 & \bm{0}^T \\ \tilde{\bm{b}}^\mu & \tilde{\mathcal{O}}^{0\mu} \end{pmatrix} \right]  \begin{pmatrix} 1/\sqrt{2^N} \\ \tilde{x}_{k-1} \end{pmatrix} \nonumber \\[3 pt]
    &= \begin{pmatrix} 1/\sqrt{2^N} \\[6 pt] \frac{1}{\sqrt{2^N}} \sum_{\mu=1}^3 s_{k,\mu} \tilde{\bm{b}}^\mu + \left( \tilde{\mathcal{O}}^{00} + \sum_{\mu=1}^3 s_{k,\mu} \tilde{\mathcal{O}}^{0\mu} \right) \tilde{x}_{k-1} \end{pmatrix}.
\end{align}
\end{widetext} 
Extracting the non-trivial lower block, we obtain an affine transformation for the traceless memory vector:
\begin{equation}
    \tilde{x}_k = \left( \tilde{\mathcal{O}}^{00} + \sum_{\mu=1}^3 s_{k,\mu} \tilde{\mathcal{O}}^{0\mu} \right) \tilde{x}_{k-1} + \frac{1}{\sqrt{2^N}} \sum_{\mu=1}^3 s_{k,\mu} \tilde{\bm{b}}^\mu.
\end{equation}

\subsection{Final State-Affine Form}
To present this elegantly in the full $c$-dimensional space (incorporating the fixed trace component as a bias mechanism), we define a 4-dimensional tensor of matrices $\mathbf{A} = (A_0, A_1, A_2, A_3)$ and a 4-dimensional tensor of vectors $\mathbf{b} = (\bm{b}_0, \bm{b}_1, \bm{b}_2, \bm{b}_3)$, specified as:
\begin{align}
    A_0 &= [1] \oplus \tilde{\mathcal{O}}^{00}, & A_\mu &= [0] \oplus \tilde{\mathcal{O}}^{0\mu}, \\
    \bm{b}_0 &= \bm{0}, & \bm{b}_\mu &= \begin{pmatrix} 0 \\ \frac{1}{\sqrt{2^N}} \tilde{\bm{b}}^\mu \end{pmatrix}.
\end{align}
With these definitions, the complete recurrence relation for the $c$-dimensional effective state $x_k$ condenses exactly to the classical State-Affine System:
\begin{equation}
    x_{k} = (\bm{s}_{k} \cdot \mathbf{A}) \, x_{k-1} + \bm{s}_{k} \cdot \mathbf{b}.
\end{equation}
This rigorously concludes the derivation, mathematically proving that qubit resetting maps strictly onto a classical affine filter, where the encoded input vector $\bm{s}_k$ linearly mixes the internal transition matrices.

\section{Proof of Theorem \ref{thm:esp_without_magic}: Reservoir under Clifford Dynamics} 
\label{app:proof_thm_esp}

In this appendix, we provide a rigorous proof of Theorem~\ref{thm:esp_without_magic}. We begin by establishing the precise conditions under which a Clifford reservoir satisfies the echo state property (ESP). Next, we present a combinatorial derivation demonstrating that, in the thermodynamic limit ($N \to \infty$), a generic Clifford reservoir driven by the input state $\rho^{(1)}_{\rm in}(u_k) = \frac{1}{2}(I + u_k Z)$ satisfies the ESP with a probability of $3/4$. 
Finally, we show that even when the ESP is satisfied, the absence of non-Clifford magic results in an expected number of nonzero Pauli components of order $O(1)$ in the long-time limit.

\subsection{ESP under Clifford Dynamics}

\begin{lemma}[ESP under Clifford Dynamics]\label{lem:esp_under_clifford}
Let $c = 4^{N-1}$. For a Clifford reservoir driven by an input sequence $u_k$, define the projected map $F(i) := f(i) \pmod c$ and the branch map $G_m(i) := \lfloor f(F^{m-1}(i))/c \rfloor$ for $m\in \mathbb{Z}_+$. Let $\mathcal{I}_t = \{ j \in [3]_0 \mid |s_{k,j}| = t, \forall k \}$ for $t \in \{0, 1\}$ denote the sets of indices associated with constant-magnitude input components. For a given output Pauli index $i \in [c]_0 \setminus \{0\}$, let $p$ be the smallest integer such that $F^p(i) = F^q(i)$ for some $q < p$. The output index $i$ satisfies the ESP if and only if at least one of the following conditions holds:
(a) $F^p(i) = 0$;
(b) $\exists\, m \in [q+1, p]$ such that $G_m(i) \notin \mathcal{I}_1$;
(c) $\exists\, m \in \mathbb{Z}_+$ such that $G_m(i) \in \mathcal{I}_0$.
\end{lemma}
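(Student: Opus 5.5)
The plan is to exploit the fact that a Clifford unitary acts in Pauli-Liouville space as a signed permutation. This collapses the state-affine recurrence of Appendix~\ref{app:sas_derivation} into a single non-branching scalar recurrence along one trajectory of Pauli indices. ESP for index $i$ then reduces to whether a product of input components along that trajectory tends to zero.

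\emph{Step 1: one-step rule.} For $i\in[c]_0$ I would compute
\begin{equation*}
x_{k,i}=\tfrac{1}{\sqrt{2^N}}\Tr[U\rho'_kU^\dagger B^N_i]=\tfrac{1}{\sqrt{2^N}}\Tr[\rho'_k\,U^\dagger B^N_iU]=\eta(i)\,r'_{k,f(i)} .
\end{equation*}
By the component rule $r'_{k,j}=s_{k,p(j)}\,r_{k-1,q(j)}$ derived in Appendix~\ref{app:reset_derivation}, this gives
\begin{equation*}
x_{k,i}=\eta(i)\,s_{k,G_1(i)}\,x_{k-1,F(i)} .
\end{equation*}

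\emph{Step 2: iterate back to $k=-\tau$.} Iterating the rule gives
\begin{equation*}
x_{0,i}=\Big[\prod_{m=1}^{\tau}\eta(F^{m-1}(i))\,s_{1-m,\,G_m(i)}\Big]\,x_{-\tau,F^\tau(i)} ,
\end{equation*}
where $G_m(i)=p(f(F^{m-1}(i)))$ matches the branch map in the statement. Hence
\begin{equation*}
|x_{0,i}-x'_{0,i}|=\Big(\prod_{m=1}^{\tau}|s_{1-m,G_m(i)}|\Big)\,\big|x_{-\tau,F^\tau(i)}-x'_{-\tau,F^\tau(i)}\big| .
\end{equation*}
Since $F$ acts on a finite set, the orbit of $i$ is eventually periodic, with preperiod $q$ and period $p-q$, as defined in the statement. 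Two facts are then needed.
\begin{itemize}
\item If $F^p(i)=0$, then the orbit is eventually fixed at $0$. The last factor vanishes because $x_{\cdot,0}\equiv 1/\sqrt{2^N}$ for every state. This gives sufficiency of (a).
\item If the cycle avoids $0$, then $F^\tau(i)\neq0$ for all $\tau$. Choosing $\rho=(I\pm B^N_{F^\tau(i)})/2^N$ gives two admissible initial states in $\mathcal{X}$ whose difference in that component is bounded away from zero, uniformly in $\tau$.
\end{itemize}
So in the second case, ESP for $i$ is equivalent to the prefactor product tending to zero for almost every input sequence.

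\emph{Step 3: analyse the product for a.e.\ input.} The inputs carry the product Lebesgue measure on $[0,1]^{\mathbb{Z}_{\le0}}$, which is a probability measure with i.i.d.\ coordinates.
\begin{itemize}
\item Condition (c): some factor vanishes identically, so the product is zero once $\tau$ passes that $m$.
\item Condition (b): some branch $j=G_m(i)$ inside the cycle, $m\in[q+1,p]$, has $j\notin\mathcal{I}_1$. That branch recurs every $p-q$ steps, and $|s_{k,j}|\le1<$ fails to be identically $1$. So there is $\delta<1$ with $\Pr(|s_{k,j}|\le\delta)>0$. By independence and Borel–Cantelli (or the strong law applied to $\sum\log|s_{k,j}|$, e.g.\ $\mathbb{E}\log u=-1$ for the $Z$ encoding), infinitely many of these factors are at most $\delta$ almost surely, and the product tends to $0$.
\item Converse: suppose none of (a)--(c) holds. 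All cycle factors have modulus exactly $1$, and the finitely many preperiod factors are not identically zero. The product then stabilises at a nonzero constant on a set of inputs of positive measure, and ESP fails.
\end{itemize}

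\emph{Main obstacle.} The hard part is making ``not in $\mathcal{I}_0$'' and ``not in $\mathcal{I}_1$'' match the measure-theoretic notion of ESP in Definition~\ref{def:esp}. This needs two facts about the encoding function $s_{k,j}(u_k)$:
\begin{itemize}
\item if it is not identically zero, it is nonzero on a positive-measure set;
\item if it is not identically of modulus $1$, it is bounded below $1$ on a positive-measure set.
\end{itemize}
For the $Z$ encoding of Theorem~\ref{thm:esp_without_magic} both are immediate, since $s=(1,0,0,u_k)$. For a general encoding I would state this mild regularity assumption explicitly. A real-analytic or continuous non-constant $s_{k,j}$ suffices.
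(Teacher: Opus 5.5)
Your proposal is correct and follows essentially the same route as the paper's proof: the signed-permutation one-step rule $x_{k,i}=\eta(i)\,s_{k,G_1(i)}\,x_{k-1,F(i)}$, the tail-plus-cycle decomposition of the backward trajectory, the invariant trace component for (a), an identically vanishing factor for (c), a law-of-large-numbers-type argument for (b), and unit-modulus cycle weights for the converse. Your additions make rigorous points the paper passes over informally: the explicit witness states $(I\pm B^N_{F^\tau(i)})/2^N$ for the necessity direction, Borel--Cantelli in place of the strong law, and the stated positive-measure regularity assumption on $s_{k,j}(u)$.
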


\textit{Proof.} Any Clifford unitary $C \in \mathrm{Cl}_N$ induces a strict permutation $f: [4^N]_0 \to [4^N]_0$ and an associated sign function $\eta: [4^N]_0 \to \{+1, -1\}$, such that $C^\dagger B_i^N C = \eta(i) B_{f(i)}^N$. Consequently, the orthogonal matrix $\mathcal{O}$ governing the unitary evolution in the Pauli-Liouville space acts as a signed permutation matrix, containing exactly one nonzero entry ($\pm 1$) per row and column. As shown in Eq.~\eqref{eq:reset_matrix}, the resetting operation distributes the preceding un-reset state $x_{k-1}$ into the intermediate state $r_k'$. The matrix $\mathcal{O}$ then permutes the elements of $r_k'$ to generate the post-unitary state $r_k$. Therefore, using the functions defined in the lemma, the $i$-th Pauli component of the output state $x_k$ depends on exactly one component from the preceding state $x_{k-1}$:
\begin{equation}
    x_{k, i} = \eta(i) \, s_{k, G_1(i)} \, x_{k-1, F(i)},
\end{equation}
where $F(i) := f(i) \pmod c$ dictates the backward trajectory in the un-reset sector, and $G_1(i) := \lfloor f(i)/c \rfloor \in \{0,1,2,3\}$ determines which component of the input vector $\bm{s}_k$ weights the transition.

By recursively unrolling this equation backward in time over $M$ steps, we obtain the explicit dependence of the current state on the deep past:
\begin{equation}
    x_{k, i} = \left[ \prod_{m=1}^{M} \eta(F^{m-1}(i)) \, s_{k-m+1, G_m(i)} \right] x_{k-M, F^M(i)},
\label{eq:backward_unroll}
\end{equation}
where $F^m(i)$ denotes the $m$-th application of the map $F$, and $G_m(i) := \lfloor f(F^{m-1}(i))/c \rfloor$ tracks the input sector targeted at the $m$-th backward step. 

Because the effective state space dimension $c = 4^{N-1}$ is finite, the sequence of indices $\{i, F(i), F^2(i), \dots\}$ cannot wander infinitely without repetition. Topologically, the backward trajectory of any index $i$ must eventually form a ``lasso'' structure: a transient tail of length $q$ followed by a periodic cycle of length $p-q$. Thus, there exist minimal integers $q \ge 0$ and $p > q$ such that $F^p(i) = F^q(i)$.

To determine whether the reservoir satisfies the ESP, we must evaluate the limit of Eq.~\eqref{eq:backward_unroll} as $M \to \infty$. Following this cycle structure, the backward path of the $i$-th Pauli component can be rigorously decomposed as:
\begin{equation} 
    x_{k, i} = W_{0}^{(i)} \left( \prod_{\mu=0}^{\alpha-1} W^{(i)}_{\text{cycle},\mu} \right) x_{k-q-\alpha(p-q), F^q(i)}, \label{eq:app_output_esp}
\end{equation}
where 
\begin{gather}
    W_{0}^{(i)} = \prod_{m=1}^{q} \eta(F^{m-1}(i))\, s_{k-m+1,G_{m}(i)}, \\
    W^{(i)}_{\text{cycle}, \mu} = \prod_{m=q+1}^{p} \eta(F^{m-1}(i))\, s_{k-m-\mu(p-q),G_{m}(i)},
\end{gather}
and $\alpha\in\mathbb{N}$ denotes the number of complete cycles traversed backward. Although the operator index flow is strictly periodic, the temporal inputs vary across different periods, necessitating the distinct cycle index $\mu$ in $W^{(i)}_{\text{cycle},\mu}$. 

As $\alpha \to \infty$, the temporal index $k-q-\alpha(p-q)$ extends into the infinite past. By definition, the Echo State Property (ESP) requires the reservoir's state to asymptotically lose all dependence on its initial condition. According to Eq.~\eqref{eq:app_output_esp}, for a specific index $i$, this implies: 
\begin{equation}
    \text{ESP} \iff \left(F^q(i)=0\right) \vee \left[\lim_{\alpha \to \infty} W_{0}^{(i)} \prod_{\mu=0}^{\alpha-1} W^{(i)}_{\text{cycle},\mu} = 0 \right].
\end{equation}
The first condition corresponds exactly to condition (a) in Lemma~\ref{lem:esp_under_clifford}. When $F^q(i)=0$, the backward trajectory of the index reaches the identity component. Because the trace of any valid physical density matrix is strictly 1, the identity component is invariant: $x_{-q-\alpha(p-q), 0} = 1/\sqrt{2^N}$ for any initial state. Consequently, the state's dependence on the initial condition perfectly vanishes, and the ESP is trivially satisfied. 

To complete the proof, it remains to analyze the vanishing weight scenario. We will demonstrate that conditions (b) and (c) defined in the lemma capture this exact requirement, specifically establishing the equivalence:
\begin{equation}
    \text{(b)} \vee \text{(c)} \iff \lim_{\alpha \to \infty} W_{0}^{(i)} \prod_{\mu=0}^{\alpha-1} W^{(i)}_{\text{cycle},\mu} = 0.
    \label{eq:app_esp_condition}
\end{equation}

If condition (b) holds, there exists an $m_0 \in [q+1,p]$ such that $G_{m_0}(i) \notin \mathcal{I}_1$. Therefore, the magnitude of the cycle weight is bounded by the corresponding input component: $\left|W^{(i)}_{\text{cycle},\mu}\right| \le \left|s_{k-m_0-\mu(p-q), G_{m_0}(i)}\right|$. Let $X_{\mu} = \left|s_{k-m_0-\mu(p-q), G_{m_0}(i)}\right|$ for $\mu \in \mathbb{Z}$ be random variables defined on $[0,1]$. If there exists any $\mu$ such that $X_\mu = 0$, then $\prod_{\mu=0}^{\alpha-1} W^{(i)}_{\text{cycle},\mu} = 0$ follows directly. 

If $X_\mu > 0$ for all $\mu \in \mathbb{Z}$, assuming the input sequence is generated from a stationary probability distribution, we can bound the infinite product as: 
\begin{align}
    \left|\lim_{\alpha \to \infty} \prod_{\mu=0}^{\alpha-1} W^{(i)}_{\text{cycle},\mu}\right| &\le \lim_{\alpha \to \infty} \prod_{\mu=0}^{\alpha-1} X_\mu \notag \\
    &= \lim_{\alpha \to \infty} \exp\left(-\sum_{\mu=0}^{\alpha-1} Y_\mu\right) \notag \\
    &= \lim_{\alpha \to \infty} \exp(-\alpha \mathbb{E}[Y_0]), 
\end{align}
where $Y_\mu = -\ln X_\mu \in [0,\infty)$ for $\mu \in \mathbb{Z}$, and $\mathbb{E}[Y_0]$ is the expected value of this random variable. The final step follows rigorously from the Strong Law of Large Numbers. Because $G_{m_0}(i) \notin \mathcal{I}_1$, the magnitude $X_\mu$ is not constantly $1$. Thus, $X_\mu < 1$ occurs with a strictly positive probability, ensuring $Y_\mu > 0$ with nonzero probability. For random time sequences with respect to the product Lebesgue measure, this dictates that $\mathbb{E}[Y_0] > 0$. Consequently, 
\begin{equation}
    \left|\lim_{\alpha \to \infty} \prod_{\mu=0}^{\alpha-1} W^{(i)}_{\text{cycle},\mu} \right| \le \lim_{\alpha \to \infty} \exp(-\alpha \mathbb{E}[Y_0]) = 0. 
\end{equation}
This confirms that condition (b) is strictly sufficient for the right-hand side of Eq.~\eqref{eq:app_esp_condition}. 

If condition (c) holds, the right-hand side of Eq.~\eqref{eq:app_esp_condition} follows trivially, as the requirement $G_m(i) \in \mathcal{I}_0$ forces at least one of the multipliers to be exactly zero. 
Conversely, if neither (b) nor (c) is satisfied, then $\left|W^{(i)}_{\text{cycle},\mu} \right| = 1$ for all $\mu \in \mathbb{Z}$, which yields
\begin{equation}
    \left|\lim_{\alpha \to \infty} W_0^{(i)} \prod_{\mu=0}^{\alpha-1} W^{(i)}_{\text{cycle},\mu} \right| = |W_0^{(i)}| \neq 0. 
\end{equation}
Therefore, the mathematical equivalence in Eq.~\eqref{eq:app_esp_condition} strictly holds. Combined with condition (a), this completes the proof of Lemma~\ref{lem:esp_under_clifford}. 
\hfill $\blacksquare$

\subsection{Combinatorial Derivation of $p_{\rm ESP} = 3/4$}

We now derive the exact probability of this pathological violation occurring for a random Clifford reservoir in the thermodynamic limit. 

For the $Z$-basis encoding defined in Theorem~\ref{thm:esp_without_magic}, the constant-magnitude index set is trivial, $\mathcal{I}_1 = \{0\}$. Therefore, to violate the ESP, there must exist at least one cycle of Pauli strings that is entirely confined to the un-reset sector. Physically, this means there is a subset of Pauli strings strictly of the form $I \otimes B^{N-1}$ that, under the Clifford unitary $C$, exclusively map to other strings of the exact same form $I \otimes B^{N-1}$, thereby completely bypassing the single-qubit reset dissipation. 
The Clifford group acts on the Pauli group equivalently to the symplectic group $\mathrm{Sp}(2N, \mathbb{Z}_2)$ acting on a $2N$-dimensional binary vector space. The subspace of Pauli operators acting as the identity on the reset qubit constitutes a symplectic subspace of co-dimension 2. 

The evaluation of the ESP thus reduces to a classic problem in random matrix theory over finite fields: What is the distribution of purely invariant cycles within a fixed co-dimension 2 subspace under a random symplectic transformation? According to the cycle index theorems~\cite{harary1973graphical} for the symplectic group over $\mathbb{Z}_2$, as $N \to \infty$, the number of cycles $X$ strictly confined to such a subspace converges to a Poisson distribution:
\begin{equation}
    P(X = z) = \frac{\lambda^z e^{-\lambda}}{z!}.
\end{equation}
The Poisson parameter $\lambda$ is determined by the integral over the invariant measure of the subspace constraints, which for a $\mathbb{Z}_2$ symplectic space of co-dimension 2 evaluates exactly to:
\begin{equation}
    \lambda = \sum_{j=1}^{\infty} \frac{1}{j \cdot 2^{2j}} = -\ln\left(1 - \frac{1}{4}\right) = \ln\left(\frac{4}{3}\right) .
\end{equation}
The reservoir satisfies the ESP if and only if \textit{zero} such pathological cycles exist ($X = 0$). Evaluating the Poisson probability mass function for the null event yields the precise theoretical bound:
\begin{equation}
    p_{\rm ESP} = P(X = 0) = e^{-\lambda} = \frac{3}{4}.
\end{equation}
Thus, a random Clifford reservoir satisfies the ESP with a probability of $3/4$ in the large-$N$ limit. This fractional stability arises strictly from the topological truncation of the backward permutation paths.
We provide direct numerical results in Fig.~\ref{Fig_app1} to quantitatively corroborate the theoretical result. 

\begin{figure}[htbp]
\centering
\includegraphics[width=0.45\textwidth]{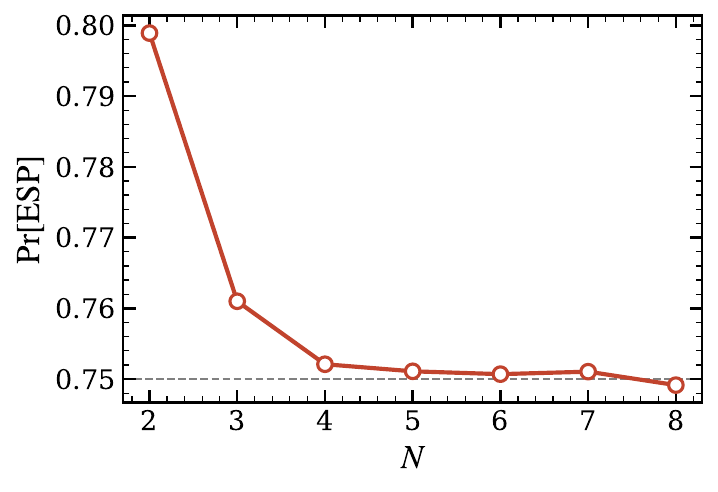}
\caption{
\textbf{Probability of a random Clifford reservoir exhibiting the ESP.} 
The input state is $\rho_{\text{in}}^{(1)}(u_k)=(I+u_kZ)/2$. 
The quantum reservoir consists of a depth-10 random Clifford circuit, where $N$ denotes the system size. The probability is averaged over $10,000$ random Clifford realizations for $N \in \{2, 3, 4, 5, 6\}$, while for $N=7$ and $8$, it is averaged over $20,000$ and $60,000$ realizations, respectively. For larger values of $N$, an increased number of realizations is used to ensure convergence.
}
\label{Fig_app1}
\end{figure}

For a different single-qubit input state, the only distinction is that the index set $\mathcal{I}_1$ may vary, thereby altering the co-dimension of the subspace. 
Consequently, the Poisson parameter $\lambda$ takes a different value while remaining of order $\mathcal{O}(1)$ with respect to the system size $N$. 
Thus, the resulting probability may shift to a different $\mathcal{O}(1)$ constant.
For instance, considering the common input state: 
\begin{equation}
\ket{\psi^{(1)}_{\rm in}(u_k)} = \sqrt{1 - u_k} \ket{0} + \sqrt{u_k} \ket{1},
\label{app:state_encoding}
\end{equation}
the index set $\mathcal{I}_1$ is still $\{0\}$ and the probability therefore remains the same.

\subsection{Expected Number of nonzero Components}

Finally, we demonstrate that in the thermodynamic limit, even when a magic-free Clifford reservoir satisfies the ESP, it yields an expected number of nonzero Pauli components of strictly $O(1)$ in the long-time limit. 

Lemma~\ref{lem:esp_under_clifford} establishes that for a Clifford reservoir satisfying the ESP, every output index must fulfill one of three specific conditions. As demonstrated in our proof, conditions (b) and (c) invariably lead to vanishing components in the long-time limit. Consequently, determining the expected number of surviving nonzero components reduces strictly to evaluating the expected number of indices satisfying condition (a) under a random Clifford unitary.



A more intuitive perspective is that, given the reservoir satisfies the ESP, we can safely initialize the system in the maximally mixed state, represented by $x_{-\tau, i}= \delta_{i,0}/\sqrt{2^N}$ for all $i$. 
From this point, we can trace the forward evolution to observe how information flows within the operator space. 
During the initial resetting operation [Eq.~\eqref{app:resetting_matrix}], the component $x_{-\tau, 0}$ is mapped to $x'_{-\tau+1, 0}$ in the first sector (indices $0$ to $c-1$) and to $x'_{-\tau+1, 3c}$ in the fourth sector (indices $3c$ to $4c-1$). 
Consequently, the initial component is preserved while a new nonzero node is spawned at index $3c$.Subsequently, the applied Clifford unitary maps the corresponding Pauli string ($Z \otimes I^{N-1}$) to a random operator. 
In the thermodynamic limit ($N\to \infty$), this operator lands back in the first sector with a probability of $1/4$, thereby generating an active node that survives the subsequent reset. 
If it maps to any other sector, the information is discarded by the next resetting operation, terminating that specific dynamical path. 
When a node successfully survives in the first sector, the subsequent resetting matrix $\mathcal{R}(u_{k+1})$ physically distributes it to both the first and fourth sectors, creating two new active branches. Combining these operations, a single forward iteration starting from an active node yields two surviving branches with a probability of $1/4$, and zero branches with a probability of $3/4$.

Because multiple active nodes can populate a single hierarchical time step, this forward dynamical path forms a stochastic tree governed rigorously by a Galton-Watson branching process~\cite{harris1963theory}. Starting from the root node at index $3c$, each active node independently yields a random number of offspring: with a probability of $1/4$, it spawns two child nodes (advancing the tree by one level), and with a probability of $3/4$, it yields zero offspring, halting its growth. 

Importantly, during the resetting operation at each time step $k$, the component $x_{k-1,0}=1/\sqrt{2^N}$ is mapped to $x_{k,3c}$ via multiplication by $u_k$. 
Provided that the input $u_k \neq 0$ for all $k$, each resetting step effectively spawns a new root node that branches out in subsequent time steps. 
At the output step $k=0$, the post-reset state $x'_0$ is composed of the $l$-th generation of nodes originating from the roots seeded at $x'_{-l+1}$ (for $l=1,2,\dots$), which collectively constitute a complete tree structure. 
Consequently, since the components in $x_0$ are merely a permutation of those in $x'_0$, the total number of nodes in this tree corresponds exactly to the expected number of nonzero, traceless Pauli components in $x_0$. 
Let $N_{\rm nodes}$ denote the total size of this tree. The expectation value $\mathbb{E}[N_{\rm nodes}]$ satisfies the recursive relation:
\begin{equation}
    \mathbb{E}[N_{\rm nodes}] = \frac{3}{4} \times 1 + \frac{1}{4} \big(1 + 2\mathbb{E}[N_{\rm nodes}]\big), 
    \label{app:branching}
\end{equation}
where the term $2\mathbb{E}[N_{\rm nodes}]$ accounts for the expected number of nodes contained within the two structurally identical subtrees generated upon a successful split. 
The average is taken over the Clifford circuit ensemble. 
Solving this relation yields $\mathbb{E}[N_{\rm nodes}] = 2$. Including the zeroth component (the global identity operator), which is permanently fixed to $1/\sqrt{2^N}$ due to the trace-preserving property of the quantum dynamics, the total expected number of nonzero Pauli components is exactly $3$. 
Note that this conclusion relies on the assumption that every input $u_k$ is nonzero. 
If any input within the sequence is zero, certain branches of the tree may be prematurely truncated, resulting in a smaller final expected number of active nodes. 
Therefore, the value 3 serves as a rigorous upper bound for any input sequence.
We provide direct numerical results in Fig.~\ref{Fig_app2} to quantitatively corroborate the theoretical result.

\begin{figure}[htbp]
\centering
\includegraphics[width=0.45\textwidth]{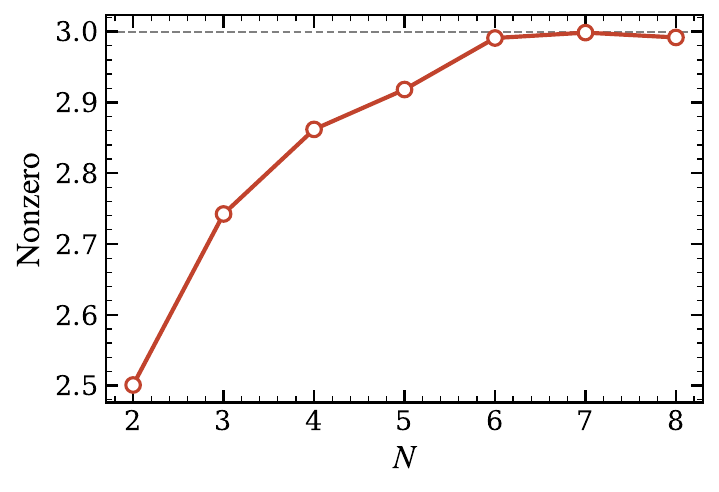}
\caption{
\textbf{Number of the nonzero outputs of a random Clifford reservoir exhibiting the ESP.} 
The input state is $\rho_{\text{in}}^{(1)}(u_k)=(I+u_kZ)/2$. 
The quantum reservoir consists of a depth-10 random Clifford circuit, where $N$ denotes the system size. The number of nonzero outputs is averaged over $10,000$ random Clifford realizations for $N \in \{2, 3, 4, 5, 6\}$, while for $N=7$ and $8$, it is averaged over $20,000$ and $60,000$ realizations, respectively. For larger values of $N$, an increased number of realizations is used to ensure convergence.
}
\label{Fig_app2}
\end{figure}

For a different single-qubit input state, the only distinction is that the index set $\mathcal{I}_0$ may vary, thereby altering the branching number at each level. 
When $|\mathcal{I}_0|\neq0$, the coefficient before $\mathbb{E}[N_{\rm nodes}]$ on the right hand side of Eq.~\eqref{app:branching} may vary, while the solution remains of order $O(1)$ with respect to the system size $N$. 
Thus, the resulting expectation number also shifts to a different $O(1)$ constant. 
For instance, considering the common input state in Eq.~\eqref{app:state_encoding}, we have $|\mathcal{I}_0|=1$. Following a calculation similar to the one above, the total expected number of nonzero Pauli components is exactly $9$.

When $|\mathcal{I}_0|=0$, $\mathbb{E}[N_{\rm nodes}]$ tends to diverge. 
However, given an arbitrarily small numerical precision threshold $\epsilon > 0$, we will demonstrate through a refined analysis that the expected number of nonzero Pauli components with magnitudes exceeding $\epsilon$ remains $O(1)$, independent of the system size $N$.

Suppose $|\mathcal{I}_1|=1$. Consequently, the remaining three channels, corresponding to $f(F^{q-1}(i)) \in \{c, 2c, 3c\}$, are dynamically coupled to the input $u_k$ and are not constant. 
The average magnitude of each channel is a constant within $(0,1)$. Without loss of generality, we set this magnitude to $1/2$ (the subsequent proof holds universally for any arbitrary constant in this interval). 
Recalling that the resetting operation is governed by the matrix in Eq.~\eqref{app:resetting_matrix}, this dynamical evolution maps exactly to a Galton-Watson-type branching process defined on a valued tree. 
Building upon this tree structure described above, we assign a scalar value to each node. Starting from three root nodes at index $\{c,2c,3c\}$, each active node with value $v$ independently yields a random number of offspring: with a probability of 1/4, it spawns one child node with value $v$ and three nodes with value $v/2$ (advancing the tree by one level), and with a probability of 3/4, it yields zero offspring, halting its growth. 
To evaluate the tree originating from a single root node with an initial value of $1$, let $N_i$ denote the number of nodes carrying exactly the value $2^{-i}$, and let $e_i = \mathbb{E}[N_i]$ denote its expectation. 
We define a generalized function $E(V, x)$ representing the expected number of nodes with value $x$ generated from a root node with initial value $V$. By definition, the target expectation is $e_i = E(1, 2^{-i})$. 
Because the branching rules scale proportionally with the parent node's value, the function $E$ exhibits strict scale invariance:
\begin{equation}
    E(V, x) = E\left(1, \frac{x}{V}\right).
\end{equation}
For the base case $i=0$, a value of $1$ can only be inherited directly:
\begin{equation}
    e_0 = 1 + \frac{1}{4} \Big[ 1 \cdot E(1, 1) \Big] + \frac{3}{4} \cdot 0 = 1 + \frac{1}{4}e_0,
\end{equation}
which immediately gives $e_0 = 4/3$.
For $i > 0$, the initial root node does not contribute directly to the count, leading to the recurrence relation:
\begin{align}
    e_i &= 0 + \frac{1}{4} \Big[ 1 \cdot E(1, 2^{-i}) + 3 \cdot E\left(\frac{1}{2}, 2^{-i}\right) \Big] \notag \\
    &= \frac{1}{4} \Big[ e_i + 3 \cdot E\left(1, 2^{-i+1}\right) \Big] \notag \\
    &= \frac{1}{4} e_i + \frac{3}{4} e_{i-1}.
\end{align}
Solving this relation yields $e_i = e_{i-1}$, implying $e_i = 4/3$ for all $i \in \mathbb{N}$.

Finally, given an arbitrarily small threshold $\epsilon \in (0,1)$, the expected number of nodes with values strictly greater than $\epsilon$ is
\begin{equation}
    \mathbb{E}[N_{> \epsilon}] = \frac{4}{3}(\lfloor -\log_2 \epsilon \rfloor + 1), 
\end{equation}
which is of order $O(1)$ with respect to the system size $N$.

\section{Proof of Theorem \ref{thm:expressivity_limit}: Expressivity Limit of Qubit Resetting}
\label{app:proof_thm_expressivity}

In this appendix, we provide a rigorous mathematical proof of Theorem~\ref{thm:expressivity_limit}. Our objective is to demonstrate that an $N$-qubit reservoir, where $m$ qubits are reset at each time step using standard $Z$-basis encoding, is fundamentally isomorphic to a classical polynomial filter. Specifically, we prove that the dependence of any output observable on a specific past input $u_k$ is strictly bounded by a polynomial of degree $m$.

\subsection{Pauli Expansion of the $m$-Qubit Input State}
We begin by analyzing the exact function form of the injected quantum state. In the generalized $m$-qubit resetting scheme, the first $m$ qubits are traced out and replaced by $m$ independent copies of the input-encoded state:
\begin{equation}
    \rho^{(m)}_{\rm in}(u_k) = \left[ \frac{1}{2}(I + u_k Z) \right]^{\otimes m}.
\end{equation}
To understand how this state acts in operator space, we expand this $m$-fold tensor product into the local Pauli basis $\mathcal{P}_m = \{I, X, Y, Z\}^{\otimes m}$. Using the lexicographical ordering of Pauli strings, the generalized Bloch vector representing this $m$-qubit input, denoted as $\bm{s}^{(m)}_k(u_k) \in \mathbb{R}^{4^m}$, is given by
\begin{equation}
    \bm{s}^{(m)}_k(u_k) = (1, 0, 0, u_k)^{\otimes m}. 
\end{equation}
Let $w_Z(\mu)$ denote the Hamming weight (the number of $Z$ operators) of the $\mu$-th Pauli string $P_\mu \in \mathcal{P}_m$. The $\mu$-th component of the Bloch vector scales as $s^{(m)}_{k,\mu}(u_k) \propto u_k^{w_Z(\mu)}$. For example, the identity string $I^{\otimes m}$ has $w_Z = 0$, while the fully polarized string $Z^{\otimes m}$ has $w_Z = m$. Because the maximum possible number of $Z$ operators in an $m$-qubit string is exactly $m$, the generalized Bloch vector $\bm{s}^{(m)}_k(u_k)$ is a vector polynomial in $u_k$ whose maximum degree is strictly bounded by $m$.

\subsection{Generalizing the State-Affine System}
We now trace out the $m$ reset qubits to track the effective memory state of the remaining $N-m$ qubits. Let $c_m = 4^{N-m}$ denote the dimension of the un-reset Liouville subspace. 

Following the identical logical reduction utilized in Appendix~\ref{app:sas_derivation}, the state vector of the un-reset qubits, $x_k \in \mathbb{R}^{c_m}$, updates according to a linear combination of internal transition matrices. However, instead of four branches (as in the single-qubit reset case), the dynamics are now driven by the $4^m$ components of the $m$-qubit input vector:
\begin{equation}
    x_k = \left[ \sum_{\mu=0}^{4^m-1} s^{(m)}_{k,\mu}(u_k) A_\mu \right] x_{k-1} + \sum_{\mu=0}^{4^m-1} s^{(m)}_{k,\mu}(u_k) \bm{b}_\mu ,
\end{equation}
where $A_\mu \in \mathbb{R}^{c_m \times c_m}$ and $\bm{b}_\mu \in \mathbb{R}^{c_m}$ are constant matrices and vectors determined entirely by the fixed unitary evolution $U$.

We can compactly rewrite this $m$-qubit resetting map by defining the matrix tensor $\mathbf{A} = (A_0, A_1, \dots, A_{4^m-1})$ and the vector tensor $\mathbf{b} = (\bm{b}_0, \bm{b}_1, \dots, \bm{b}_{4^m-1})$:
\begin{equation}
    x_k = \big(\bm{s}^{(m)}_k \cdot \mathbf{A}\big) x_{k-1} + \bm{s}^{(m)}_k \cdot \mathbf{b}.
\label{eq:app_m_sas}
\end{equation}
This formulation perfectly mirrors the structure of Eq.~\eqref{eq:sas_framework}. The critical distinction is that the tensors $\bm{s}^{(m)}_k, \mathbf{A}$, and $\mathbf{b}$ now contain $4^m$ components, and the input vector $\bm{s}^{(m)}_k(u_k)$ consists of monomials in $u_k$ with degrees strictly bounded by $m$ rather than $1$. 

\subsection{Unrolling the Volterra Series}
To determine the expressive power of the reservoir, we must evaluate how the output observable at time $k=0$, denoted $x_0$, depends on the infinite sequence of past inputs $\{u_0, u_{-1}, u_{-2}, \dots\}$. 

Let $C_k := \bm{s}^{(m)}_k \cdot \mathbf{A} \in \mathbb{R}^{c_m \times c_m}$ and $D_k := \bm{s}^{(m)}_k \cdot \mathbf{b} \in \mathbb{R}^{c_m}$. Because $\mathbf{A}$ and $\mathbf{b}$ are constant structures, every individual entry of $C_k$ and $D_k$ is a polynomial function of the instantaneous input $u_k$:
\begin{equation}
    C_{k,ij} \in \mathbb{R}[u_k], \quad D_{k,i} \in \mathbb{R}[u_k]. 
\end{equation}
Since $\bm{s}^{(m)}_k = (1, 0, 0, u_k)^{\otimes m}$, the degree of each polynomial entry is strictly bounded by $m$. Therefore,  
\begin{equation}
    C_{k,ij} \in \mathbb{R}_m[u_k], \quad D_{k,i} \in \mathbb{R}_m[u_k]. 
\end{equation}

By recursively applying the state update equation [Eq.~\eqref{eq:app_m_sas}], the output vector at $k=0$ evaluates as:
\begin{align}
    x_0 &= C_0 x_{-1} + D_0 \notag \\
    &= \lim_{M \to \infty} \Big( C_0 \dots (C_{-M} x_{-M-1} + D_{-M} ) \dots + D_0 \Big) \notag \\
    &= \lim_{M \to \infty} \left[ \left(\prod_{k=0}^{M} C_{-k}\right) x_{-M-1} + \sum_{k=0}^{M} \left( \prod_{\tau=0}^{k-1} C_{-\tau} \right) D_{-k} \right] \notag \\
    &= \sum_{k=0}^{\infty} \left( \prod_{\tau=0}^{k-1} C_{-\tau} \right) D_{-k}, \label{eq:sup_state}
\end{align}
where the echo state property (ESP) guarantees that the dependence on the initial state from the infinite past ($x_{-M-1}$) vanishes asymptotically. We adopt the standard convention that an empty matrix product $\prod_{\tau=0}^{-1} C_{-\tau}$ yields the identity matrix $I$. 

Because the input variables at different time steps are structurally independent, the matrix multiplications in Eq.~\eqref{eq:sup_state} only mix cross-terms operating on different time indices ($u_{-\tau}$ and $u_{-k}$). Consequently, variables associated with the exact same instantaneous input are never multiplied together. Based on the algebraic properties established above, it strictly follows that each vector component in the infinite sum is a multivariate polynomial in $u_{-k}, \dots, u_0$, where the maximum degree of any single variable is bounded by $m$: 
\begin{equation}
    \left[ \left( \prod_{\tau=0}^{k-1} C_{-\tau} \right) D_{-k} \right]_i \in \mathbb{R}_m[u_{-k}, \dots, u_0], 
\end{equation}
for any index $i \in [c_m]_0$. 

Summing these terms over the entire history yields the strict inclusion: 
\begin{equation}
    x_{0,i} \in \bigcup_{k \ge 0} \mathbb{R}_m[u_{-k}, \dots, u_0] = \mathbb{R}_m[u_{-\infty}, \dots, u_0], 
\end{equation}
for all $i \in [c_m]_0$, which naturally follows from the definition of the bounded-degree function space $\mathbb{R}_m$.

Finally, we note that for mathematical clarity, the preceding analysis uses the first $c_m$ components ($x_k$) of the full Pauli-Liouville state vector $r_k$ as the effective memory state. The main theorems (Theorem~\ref{thm:esp_without_magic} and Theorem~\ref{thm:expressivity_limit}) remain completely valid for the full state $r_k$. The un-reset components $x_{k-1}$ dictate the intermediate state $r_k'$ immediately following the reset, and the subsequent unitary evolution (from $r_k'$ to $r_k$) acts purely as an input-independent linear transformation. This linear scrambling alters neither the asymptotic convergence behavior nor the polynomial degree bound of the underlying function space.

\section{Decomposition of the Stabilizer R\'enyi Entropy under Qubit Resetting}
\label{app:sre_decomposition}

Complementary to the main results shown above, we evaluate the amount of magic injected by the input state $\rho_{\text{in}}(u_k)$ during the qubit resetting operation, i.e., input-generated magic. 
In fact, our mathematical framework is naturally suitable for the description of magic.

\subsection{Additivity of the SRE under Tensor Products}

Immediately after the local resetting operation, the state of the $N$-qubit reservoir takes the form of a bipartite product state:
\begin{equation}
    \rho'_k = \rho^{(1)}_{\rm in}(u_k) \otimes \Tr_1[\rho_{k-1}],
\end{equation}
where $\rho^{(1)}_{\rm in}(u_k)$ acts on the input qubit, and the partial trace acts on the remaining $N-1$ qubits. 

To analyze the magic of this state, we evaluate the 2-R\'enyi SRE, defined as $\tilde{M}_2(\rho) = M_2(\rho) - S_2(\rho)$. We first consider the 2-R\'enyi purity entropy, $S_2(\rho) = -\log \Tr(\rho^2)$. Because the trace operation is strictly multiplicative over tensor products ($\Tr[A \otimes B] = \Tr[A]\Tr[B]$), the purity entropy is intrinsically additive:
\begin{align}
    S_2(\rho'_k) &= -\log \Big( \Tr[(\rho^{(1)}_{\rm in})^2] \cdot \Tr[(\Tr_1[\rho_{k-1}])^2] \Big) \nonumber \\
    &= S_2(\rho^{(1)}_{\rm in}) + S_2(\Tr_1[\rho_{k-1}]).
\end{align} 
Next, we examine the stabilizer overlap term, $M_2(\rho) = -\log \big( d \Tr[Q \rho^{\otimes 4}] \big)$. For an $N$-qubit system ($d = 2^N$), the projector onto the stabilizer code subspace is given by $Q_N = d^{-2} \sum_{P \in \mathcal{P}_N} P^{\otimes 4}$. Because the Pauli group factorizes across subsystems ($\mathcal{P}_N \cong \mathcal{P}_1 \otimes \mathcal{P}_{N-1}$), the stabilizer projector factorizes correspondingly: $Q_N = Q_1 \otimes Q_{N-1}$. 
Substituting this factorized projector into the definition of $M_2$ yields:
\begin{align}
    M_2(\rho'_k) &= -\log \Big[ 2^N \Tr\big[(Q_1 \otimes Q_{N-1}) (\rho^{(1)}_{\rm in} \otimes \Tr_1[\rho_{k-1}])^{\otimes 4}\big] \Big] \nonumber \\
    &= -\log \Big[ \big(2 \Tr[Q_1 (\rho^{(1)}_{\rm in})^{\otimes 4}]\big) \nonumber \\
    &\quad \times \big(2^{N-1} \Tr[Q_{N-1} (\Tr_1[\rho_{k-1}])^{\otimes 4}]\big) \Big] \nonumber \\
    &= M_2(\rho^{(1)}_{\rm in}) + M_2(\Tr_1[\rho_{k-1}]).
\end{align} 
Subtracting the additive $S_2$ term from the additive $M_2$ term, we conclude that the SRE perfectly decouples across the product-state bipartition:
\begin{equation}
    \tilde{M}_2(\rho'_k) = \tilde{M}_2(\rho^{(1)}_{\rm in}(u_k)) + \tilde{M}_2(\Tr_1[\rho_{k-1}]).
    \label{eq:M2}
\end{equation}
The first term quantifies the external magic injected locally by the classical input sequence, while the second term captures the magic intrinsically retained within the correlations of the $N-1$ un-reset qubits, which we refer to as the reservoir memory magic.

\subsection{Analytical Evaluation of Input-Generated Magic}
We now explicitly calculate the first term in Eq.~\eqref{eq:M2}, $\tilde{M}_2(\rho^{(1)}_{\rm in}(u_k))$, which represents the fresh quantum magic injected by the continuous input variable $u_k$. The single-qubit input state is parameterized by its generalized Bloch vector:
\begin{equation}
    \rho^{(1)}_{\rm in}(u_k) = \frac{1}{2} \sum_{\mu=0}^3 s_{k,\mu} \sigma_\mu,
\end{equation}
where $\sigma_0 \equiv I$. 
We evaluate the purity by substituting this expansion into $\Tr(\rho^2)$. Using the orthogonality of Pauli matrices, $\Tr(\sigma_\mu \sigma_\nu) = 2\delta_{\mu,\nu}$, we obtain:
\begin{equation}
    \Tr[(\rho^{(1)}_{\rm in})^2] = \frac{1}{4} \sum_{\mu,\nu=0}^3 s_{k,\mu} s_{k,\nu} \Tr(\sigma_\mu \sigma_\nu) = \frac{1}{2} \sum_{\mu=0}^3 s_{k,\mu}^2.
\end{equation}
Consequently, the purity entropy is:
\begin{equation}
    S_2(\rho^{(1)}_{\rm in}) = -\log \left( \frac{1}{2} \sum_{\mu=0}^3 s_{k,\mu}^2 \right).
\label{eq:app_s2_single}
\end{equation}

Next, we evaluate the stabilizer overlap $\Tr[Q_1 (\rho^{(1)}_{\rm in})^{\otimes 4}]$. Expanding the four-fold tensor product gives:
\begin{equation}
    (\rho^{(1)}_{\rm in})^{\otimes 4} = \frac{1}{16} \sum_{\alpha,\beta,\gamma,\delta} s_{k,\alpha} s_{k,\beta} s_{k,\gamma} s_{k,\delta} \, (\sigma_\alpha \otimes \sigma_\beta \otimes \sigma_\gamma \otimes \sigma_\delta).
\end{equation}
The single-qubit stabilizer projector is $Q_1 = \frac{1}{4} \sum_{\mu=0}^3 \sigma_\mu^{\otimes 4}$. When we trace the product of $Q_1$ and $(\rho^{(1)}_{\rm in})^{\otimes 4}$, the trace distributes over each tensor factor. Because $\Tr(\sigma_\mu \sigma_\alpha) = 2\delta_{\mu,\alpha}$, the only non-vanishing terms in the multiple sum occur when all indices strictly match ($\alpha=\beta=\gamma=\delta=\mu$). For these surviving terms, the trace yields a factor of $2^4 = 16$. Thus, the overlap simplifies exactly to:
\begin{align}
    \Tr[Q_1 (\rho^{(1)}_{\rm in})^{\otimes 4}] &= \frac{1}{64} \sum_{\mu=0}^3 s_{k,\mu}^4 \Tr[\sigma_\mu^{\otimes 4} (\sigma_\mu^{\otimes 4})] \nonumber \\
    &= \frac{16}{64} \sum_{\mu=0}^3 s_{k,\mu}^4 = \frac{1}{4} \sum_{\mu=0}^3 s_{k,\mu}^4.
\end{align}
The magic overlap $M_2$ for this single qubit ($d=2$) is therefore:
\begin{align}
    M_2(\rho^{(1)}_{\rm in}) &= -\log \left( 2 \cdot \Tr[Q_1 (\rho^{(1)}_{\rm in})^{\otimes 4}] \right) \nonumber \\
    &= -\log \left( \frac{1}{2} \sum_{\mu=0}^3 s_{k,\mu}^4 \right).
\label{eq:app_m2_single}
\end{align}

Finally, by subtracting Eq.~\eqref{eq:app_s2_single} from Eq.~\eqref{eq:app_m2_single}, the constants $\log(1/2)$ elegantly cancel out. Combining the logarithmic terms yields the exact expression for the injected magic:
\begin{equation}
    \tilde{M}_2(\rho^{(1)}_{\rm in}(u_k)) = -\log \left( \frac{\sum_{\mu=0}^3 s_{k,\mu}^4}{\sum_{\mu=0}^3 s_{k,\mu}^2} \right).
\end{equation}
This expression explicitly shows how the classical input encoding controls the amount of input-generated magic injected into the reservoir.

\section{Proof of Theorem~\ref{thm:hamiltonian_pauli_expansion_main}: Infinite-order Expressivity of Hamiltonian Encoding}
\label{app:hamiltonian_encoding_proof}

In this appendix, we present the rigorous proof of the main-text result regarding the expressive structure of Hamiltonian encoding. Our objective is to demonstrate that once the continuous temporal input is embedded directly into the generator of the open-system dynamics, the resulting discrete-time update map exhibits a fundamentally transcendental function dependence on the input amplitude. To this end, we first derive the exact Pauli-Liouville space representation of the one-step map from the Lindblad master equation, and then prove that it is transcendental. This establishes the infinite-order nonlinear structure that fundamentally distinguishes Hamiltonian encoding from qubit-resetting paradigms.

Because the dynamical map assumes the exact same function form at each discrete time step, we henceforth suppress the time index $k$ in $r_k$ and $u_k$ whenever the context is clear. For notational simplicity, we also drop the system-size superscript $N$ from the Pauli basis elements $B_i^N$, denoting them simply as $B_i$.

In the first part of the proof, we demonstrate that the discrete-time dynamical map takes the exact form
\begin{equation}
    r = \exp\Big[ \Delta t (\mathbf{L}_0 + u \mathbf{L}_1) \Big] r_{\text{prev}}, 
\label{app:hamiltonian_discrete_map}
\end{equation}
where $\mathbf{L}_0$ and $\mathbf{L}_1$ are real, input-independent matrices, and $\mathbf{L}_1$ possesses at least one nonzero eigenvalue. 
The continuous-time evolution governed by the Lindblad master equation in Eq.~\eqref{eq:lindblad_master} is given by
\begin{equation}
    \frac{d\rho(t)}{dt} 
    = -i \big[H_0 + u H_1, \, \rho(t)\big] + \sum_\mu \gamma_\mu \mathcal{D}[L_\mu] \rho(t). 
\label{app:lindblad_master}
\end{equation}
We map this density matrix to the Pauli-Liouville space by expanding it in the trace-orthogonal Pauli basis, $\rho(t) = \frac{1}{\sqrt{2^N}} \sum_j r_j(t) B_j$. 
Substituting this expansion into the master equation yields
\begin{align}
    &\frac{1}{\sqrt{2^N}} \sum_i \frac{dr_i}{dt} B_i \notag \\
    =& \frac{1}{\sqrt{2^N}} \sum_j r_j \left( -i [H_0 + uH_1, B_j] + \sum_\mu \gamma_\mu \mathcal{D}[L_\mu]B_j \right). \notag
\end{align}
Multiplying both sides by $B_i$ and taking the trace, while utilizing the orthogonality condition $\Tr(B_i B_j)=2^N\delta_{ij}$, we isolate the time derivative of the coefficient vector:
\begin{align}
    \frac{dr_i}{dt} = \sum_j (\mathbf{L}_0 + u\mathbf{L}_1)_{ij} r_j. 
\end{align}
Here, the matrix elements of the time-independent generators are explicitly defined as:
\begin{align}
    (\mathbf{L}_0)_{ij} &= \frac{1}{2^N} \Tr \left[ \left( -i [H_0, B_j] + \sum_\mu \gamma_\mu \mathcal{D}[L_\mu]B_j \right) B_i \right], \notag \\
    (\mathbf{L}_1)_{ij} &= \frac{-i}{2^N} \Tr \left[ [H_1, B_j] B_i \right] = \frac{-i}{2^N} \Tr \left[H_1 [B_j, B_i] \right], \notag
\end{align}
where the cyclic property of the trace was applied to simplify $\mathbf{L}_1$. Integrating this linear system of ordinary differential equations over the constant-input interval $\Delta t$ directly yields Eq.~\eqref{app:hamiltonian_discrete_map}, which corresponds to Eq.~\eqref{eq:hamiltonian_discrete_map} in the main text. 
In what follows, we demonstrate that a Taylor series expansion of this discrete-time update map with respect to the drive amplitude $u$ naturally recovers the infinite power series formulation presented in Theorem~\ref{thm:hamiltonian_pauli_expansion_main}.

We begin by rigorously characterizing the algebraic structure of the control generator $\mathbf{L}_1$. Let $f_{ijk}$ denote the fully anti-symmetric structure constants of the Pauli Lie algebra, defining the Lie bracket as $[B_i, B_j] = 2i \sum_k f_{ijk} B_k$. Substituting this into the expression for $\mathbf{L}_1$, and noting that $[B_j, B_i] = -2i \sum_k f_{ijk} B_k$, we obtain
\begin{equation}
    (\mathbf{L}_1)_{ij} = -\frac{1}{2^{N-1}} \sum_k f_{ijk} \Tr(H_1 B_k) = -\frac{1}{2^{N-1}} \sum_k f_{ijk} h_k, \notag
\end{equation}
where $h_k = \Tr(H_1 B_k)$ are the real expansion coefficients of the drive Hamiltonian $H_1$ in the Pauli basis. Owing to the anti-symmetry of the structure constants under index permutation ($f_{jik} = -f_{ijk}$), it immediately follows that $(\mathbf{L}_1)_{ij} = -(\mathbf{L}_1)_{ji}$. This establishes $\mathbf{L}_1$ as a strictly real skew-symmetric matrix ($\mathbf{L}_1^T = -\mathbf{L}_1$). By the premise of the theorem, the control Hamiltonian is non-trivial ($H_1 \neq c I$ for any real constant $c$), which guarantees that the generator $\mathbf{L}_1$ is non-vanishing ($\mathbf{L}_1 \neq \mathbf{0}$). According to the spectral properties of real skew-symmetric matrices, any such nonzero matrix must possess at least one pair of nonzero, purely imaginary eigenvalues. 

In the remainder of the proof, we rigorously establish that the evolution map is a transcendental matrix function of the input $u$, inherently guaranteeing an infinite-order nonlinear response. Specifically, we demonstrate that along a specific ray in the complex plane, the norm of its analytic continuation scales exponentially as $|u| \to \infty$. 
To further simplify the notation, let us define the constant real matrices $\mathbf{X} = \Delta t \mathbf{L}_0$ and $\mathbf{Y} = \Delta t \mathbf{L}_1$. The discrete update operator is given by the matrix exponential:
\begin{equation}
    F(u) = \exp(\mathbf{X} + u\mathbf{Y}).
\end{equation}

Let $D = 4^N$ denote the dimension of the Pauli-Liouville matrices. Since $\mathbf{Y} \neq \mathbf{0}$ is a strictly real skew-symmetric matrix, it must possess at least one nonzero, purely imaginary eigenvalue. Let $\beta \neq 0$ be one such eigenvalue of $\mathbf{Y}$. Consider the characteristic equation of the generator $\mathbf{X} + u\mathbf{Y}$:
\begin{equation}
    \det[\lambda I - (\mathbf{X} + u\mathbf{Y})] = 0.
\end{equation}
Introducing the asymptotic variables $t = 1/u$ and $\mu = \lambda/u$, we rewrite the generator as $\mathbf{X} + u\mathbf{Y} = u(\mathbf{Y} + t\mathbf{X})$. The characteristic equation transforms into: 
\begin{equation}
    \det[\lambda I - (\mathbf{X} + u\mathbf{Y})] = u^D \det[\mu I - (\mathbf{Y} + t\mathbf{X})] = 0.
\end{equation}
Therefore, every eigenvalue branch $\lambda(u)$ of $\mathbf{X} + u\mathbf{Y}$ corresponds identically to an eigenvalue branch $\mu(t)$ of the perturbed matrix $\mathbf{Y} + t\mathbf{X}$, related by:
\begin{equation}
    \lambda(u) = u\,\mu(1/u).
\end{equation}
In the high-drive limit ($|u| \to \infty$), we treat the intrinsic reservoir dynamics $\mathbf{X}$ as a small perturbation to the dominant drive operator $u\mathbf{Y}$. By introducing the asymptotic parameter $t = 1/u$, we analyze the eigenvalues $\mu(t)$ of the matrix $\mathbf{Y} + t\mathbf{X}$. Let $\sigma(\mathbf{Y})$ denote the spectrum of $\mathbf{Y}$, defined as the set of all its eigenvalues. According to standard finite-dimensional spectral perturbation theory, for any unperturbed eigenvalue $\beta \in \sigma(\mathbf{Y})$, there exists a corresponding perturbed eigenvalue branch $\mu(t)$ that continuously tracks $\beta$ as $t$ approaches zero:
\begin{equation}
    \mu(t) = \beta + O(t) \qquad (t \to 0).
\end{equation}
Substituting $t = 1/u$, we obtain the asymptotic behavior of the corresponding eigenvalue branch $\lambda(u)$ for large $u$:
\begin{equation}
    \lambda(u) = u\beta + O(1) \qquad (|u| \to \infty).
\end{equation}
By the spectral mapping theorem, since $\lambda(u) \in \sigma(\mathbf{X} + u\mathbf{Y})$, its exponential maps directly to the spectrum of the evolution operator:
\begin{equation}
  e^{\lambda(u)} \in \sigma\!\big(e^{\mathbf{X} + u\mathbf{Y}}\big) = \sigma(F(u)).
\end{equation}
For any valid matrix norm, the norm of a matrix is strictly bounded from below by its spectral radius $\rho(F(u))$. Hence,
\begin{equation}
    \norm{F(u)} \geq \rho(F(u)) \geq \big|e^{\lambda(u)}\big| = e^{\Re \lambda(u)}. \label{app:exp_bound}
\end{equation}

Recall that $\mathbf{Y}$ is a real skew-symmetric matrix, implying that its nonzero eigenvalue $\beta$ must be purely imaginary. We can therefore write $\beta = i\omega$ for some real nonzero constant $\omega \in \mathbb{R} \setminus \{0\}$. To exploit this, we analyze the analytic continuation of $u$ along a specific ray in the complex plane, parameterized by a real radius $\rho > 0$:
\begin{equation}
    u(\rho) = -i \rho \operatorname{sgn}(\omega). \label{app:u_rho}
\end{equation}
Along this chosen ray, the magnitude is $|u| = \rho$, and the leading term of the eigenvalue becomes strictly real and positive: $u\beta = \big(-i \rho \operatorname{sgn}(\omega)\big)(i\omega) = \rho |\omega|$. Consequently, the real part of $\lambda(u)$ scales linearly with $\rho$:
\begin{equation}
    \Re \lambda(u(\rho)) = \rho|\omega| + O(1).
\end{equation}

Since $|\omega| > 0$, there exists a strictly positive constant $c \in (0, |\omega|)$ such that for all sufficiently large $\rho$, we have $\Re \lambda(u(\rho)) \geq c\rho$. Substituting this into Eq.~\eqref{app:exp_bound} yields:
\begin{equation}
    \norm{F(u(\rho))} \geq e^{c\rho}. \label{app:exp_bound2}
\end{equation}
Therefore, the norm of the matrix $F(u)$ scales exponentially as $|u| \to \infty$ along the ray parameterized by $\rho$. We now rigorously prove that this exponential growth dictates that $F(u)$ must be a transcendental matrix function. 
By definition, a matrix-valued function $F(u)$ is algebraic if and only if every matrix element $[F(u)]_{ij}$ is an algebraic function of $u$; otherwise, it is classified as transcendental. We proceed by contradiction. Assume that $F(u)$ is an algebraic matrix function. Consequently, each element $[F(u)]_{ij}$ must satisfy a non-trivial polynomial equation $P_{ij}(u, [F(u)]_{ij}) = 0$.

According to Puiseux's theorem~\cite{brieskorn1986plane}, any branch of an algebraic function can be expanded as a convergent fractional power series (a Puiseux series) in the neighborhood of infinity. A direct asymptotic consequence of this theorem is that the growth of an algebraic function is strictly bounded by a polynomial at infinity. That is, for each matrix element $[F(u)]_{ij}$, there exist positive constants $M_{ij}, R_{ij} > 0$ and a rational number $q_{ij} \in \mathbb{Q}$ such that for all $|u| > R_{ij}$:
\begin{equation}
    \left| [F(u)]_{ij} \right| \leq M_{ij} |u|^{q_{ij}}.
\end{equation}
Since all matrix norms on a finite-dimensional vector space are equivalent, the induced matrix norm $\norm{\cdot}$ is bounded by the maximum element magnitude up to a constant factor. Thus, there exists a constant $C > 0$ such that $\norm{F(u)} \leq C \max_{i,j} |[F(u)]_{ij}|$. Let $R = \max_{i,j} R_{ij}$, $M = C \max_{i,j} M_{ij}$, and $q = \max_{i,j} q_{ij}$. It follows that the overall matrix norm is strictly bounded by a polynomial for all $|u| > R$:
\begin{equation}
    \norm{F(u)} \leq M |u|^q. \label{eq:algebraic_bound}
\end{equation}

Restricting our analysis to the specific ray parameterizing the exponential growth in Eq.~\eqref{app:u_rho}, this algebraic upper bound requires that for sufficiently large $\rho$:
\begin{equation}
    \norm{F(u(\rho))} \leq M \rho^q.
\end{equation}
However, we have already established the strict exponential lower bound in Eq.~\eqref{app:exp_bound2}:
\begin{equation}
    e^{c\rho} \leq \norm{F(u(\rho))}.
\end{equation}
Combining the algebraic upper bound with the established lower bound, we obtain:
\begin{equation}
    e^{c\rho} \leq \norm{F(u(\rho))} \leq M \rho^q.
\end{equation}
Dividing the inequality by $e^{c\rho}$ (which is strictly positive) yields: $1 \leq M  \rho^q/e^{c\rho}$. 
Since $c > 0$ is a strictly positive constant, taking the limit as $\rho \to \infty$ gives: $\lim_{\rho \to \infty} M  \rho^q/e^{c\rho} = 0$. 
This implies $1 \leq 0$, which is mathematically absurd. 
This contradiction rigorously invalidates our initial assumption. The asymptotic growth of $\norm{F(u)}$ strictly violates the algebraic bounds imposed by Puiseux's theorem. Therefore, $F(u)$ cannot be an algebraic matrix function; it must contain at least one transcendental element, rendering $F(u)$ an intrinsically transcendental matrix function. 

In conclusion, the mathematical impossibility of satisfying the algebraic bound formally confirms that the state update map of the Hamiltonian-encoded reservoir assumes a fundamentally transcendental function form. By continuously embedding the temporal input directly into the non-commuting generator of the quantum dynamics, the system organically synthesizes an unbounded hierarchy of nonlinear responses. This ensures that its power series expansion cannot truncate at any finite order. Consequently, Hamiltonian encoding inherently equips the quantum reservoir with an infinite-order nonlinear expressive capacity, fundamentally overcoming the restrictive polynomial bottlenecks that cripple traditional qubit-resetting architectures.

\section{Hamiltonian-encoding QRC prediction problem}
\label{app:BQP}
In this appendix, we rigorously define the Hamiltonian-encoding QRC prediction problem and prove its BQP-hardness. 

We consider a family of quantum reservoir prediction problems parameterized
by the reservoir size $N$. An instance consists of the classical
descriptions of the following objects:
\begin{equation}
    \mathcal{I}
    =
    \left(
        N,p,M,\Delta t,
        \rho_{0},
        H,
        \{L_{\mu},\gamma_{\mu}\}_{\mu=1}^{r},
        \mathcal{O},
        \mathcal{D},
        \bm{u}^{\star},
        \lambda
    \right).
\end{equation}
Here, $N$ is the number of reservoir qubits, $p$ is the length of each finite-length input sequence, $M$ is the number of training samples, and
$\Delta t>0$ is the duration of each input interval. The initial reservoir
state $\rho_{0}$ is an $N$-qubit state specified by an efficiently
preparable classical description.

Each input sample is a real-valued sequence
\begin{equation}
    \bm{u}
    =
    (u_{1},\ldots,u_{p})
    \in\mathbb{R}^{p}.
\end{equation}
During the $k$th input interval, the reservoir evolves under the
input-dependent Lindblad generator
\begin{equation}
    \mathcal{L}(u_{k})[\rho]
    =
    -i[H(u_{k}),\rho]
    +
    \sum_{\mu=1}^{r}
    \gamma_{\mu}
    \mathcal{D}[L_{\mu}](\rho),
\end{equation}
where
\begin{equation}
    \mathcal{D}[L](\rho)
    :=
    L\rho L^{\dagger}
    -
    \frac{1}{2}
    \left\{
        L^{\dagger}L,\rho
    \right\}.
\end{equation}
The Hamiltonian encoding map
\begin{equation}
    u\longmapsto H(u)
\end{equation}
is part of the problem instance. 
The Lindblad operators $L_{\mu}$ and rates $\gamma_{\mu}\geq0$ are also
part of the instance unless explicitly fixed by the problem family.

For an input sequence $\bm{u}$, the reservoir state after all $p$ input
intervals is
\begin{equation}
    \rho_{\bm{u}}
    =
    \mathcal{E}_{u_{p}}
    \circ\cdots\circ
    \mathcal{E}_{u_{1}}(\rho_{0}),
    \qquad
    \mathcal{E}_{u}
    :=
    e^{\Delta t\mathcal{L}(u)}.
\end{equation}
Let
\begin{equation}
    \mathcal{O}
    =
    \{O_{1},\ldots,O_{d}\}
\end{equation}
be a prescribed collection of bounded local observables, with
$d=\operatorname{poly}(N)$. The corresponding reservoir feature map is
defined by
\begin{equation}
    \bm{\phi}(\bm{u})
    :=
    \left(
        \operatorname{Tr}[O_{1}\rho_{\bm{u}}],
        \ldots,
        \operatorname{Tr}[O_{d}\rho_{\bm{u}}]
    \right)^{\mathsf T}.
\end{equation}
A concrete choice used throughout this work is
\begin{equation}
    d=N,
    \qquad
    O_{i}=Z_{i},
    \qquad
    i\in[N].
\end{equation}

The labeled training set is
\begin{equation}
    \mathcal{D}
    =
    \left\{
        \left(
            \bm{u}^{(m)},y^{(m)}
        \right)
    \right\}_{m=1}^{M},
\end{equation}
where
$\bm{u}^{(m)}\in\mathbb{R}^{p}$ and
$y^{(m)}\in\mathbb{R}$. The feature matrix
$F\in\mathbb{R}^{M\times d}$ and label vector
$\bm{y}\in\mathbb{R}^{M}$ are defined by
\begin{equation}
    F_{mi}
    :=
    \phi_{i}(\bm{u}^{(m)}),
    \qquad
    \bm{y}
    :=
    \left(
        y^{(1)},\ldots,y^{(M)}
    \right)^{\mathsf T}.
\end{equation}
For a regularization parameter $\lambda>0$, the ridge-regression weight
vector is
\begin{equation}
    \widehat{\bm{w}}
    :=
    \operatorname*{arg\,min}_{\bm{w}\in\mathbb{R}^{d}}
    \left\{
        \|F\bm{w}-\bm{y}\|_{2}^{2}
        +
        \lambda\|\bm{w}\|_{2}^{2}
    \right\}.
\end{equation}
Since $\lambda>0$, the minimizer is unique and is given by
\begin{equation}
    \widehat{\bm{w}}
    =
    \left(
        F^{\mathsf T}F+\lambda I_{d}
    \right)^{-1}
    F^{\mathsf T}\bm{y}.
\end{equation}
For a specified target input
$\bm{u}^{\star}\in\mathbb{R}^{p}$, the predicted value is
\begin{equation}
    f(\mathcal{I})
    :=
    \widehat{y}_{\star}
    =
    \bm{\phi}(\bm{u}^{\star})^{\mathsf T}
    \left(
        F^{\mathsf T}F+\lambda I_{d}
    \right)^{-1}
    F^{\mathsf T}\bm{y}.
\end{equation}

All continuously valued quantities appearing in the instance, including
the entries of the input sequences, labels, Hamiltonians, Lindblad
operators, rates, the interval length, and the regularization parameter,
are specified by finite classical descriptions with polynomial bit
complexity. We restrict to admissible instances for which
\begin{equation}
    N,p,M,d,r
    =
    \operatorname{poly}(|\mathcal{I}|),
\end{equation}
the Hamiltonians and Lindblad operators are sums of polynomially many
constant-local terms with polynomially bounded norms, and
\begin{equation}
    \lambda
    \geq
    \frac{1}{\operatorname{poly}(|\mathcal{I}|)}.
\end{equation}
Here, $|\mathcal{I}|$ denotes the total length of the classical
description of the instance.

The associated approximation problem asks for an estimate
$\widetilde{y}_{\star}$ satisfying
\begin{equation}
    \left|
        \widetilde{y}_{\star}
        -
        f(\mathcal{I})
    \right|
    \leq
    \epsilon,
\end{equation}
where $\epsilon^{-1}=\operatorname{poly}(|\mathcal{I}|)$.

For complexity-theoretic purposes, we consider the corresponding promise
decision problem. In addition to $\mathcal{I}$, the input contains two
rational thresholds $a>b$ satisfying
\begin{equation}
    a-b
    \geq
    \frac{1}{\operatorname{poly}(|\mathcal{I}|)}.
\end{equation}
It is promised that either
\begin{equation}
    f(\mathcal{I})\geq a
    \qquad \text{or}\qquad
    f(\mathcal{I})\leq b.
\end{equation}
The task is to output \textsc{yes} in the former case and \textsc{no} in
the latter.


Now, we prove the $\mathsf{BQP}$-hardness of the general Hamiltonian-encoding QRC prediction problem. 
To establish this, it suffices to consider its
purely unitary subclass obtained by fixing
$\rho_{0}=|0^{N}\rangle\langle 0^{N}|$ and setting
$\gamma_{\mu}=0$ for all $\mu$. We assume that the encoding map
$u\mapsto H(u)$ is part of the problem instance and may be specified, on
the finitely many input values occurring in the instance, by a
polynomial-size classical lookup table.

We give a polynomial-time many-one reduction from the standard quantum
circuit acceptance problem. An instance of this problem consists of a
polynomial-size quantum circuit
\begin{equation}
    C=G_{L}G_{L-1}\cdots G_{1}
\end{equation}
acting on $N$ qubits, where each gate belongs to a fixed finite universal
gate set and acts on at most a constant number of qubits. Equivalently,
one may consider the promise problem of deciding whether
\begin{equation}
    \mu(C)
    :=
    \langle 0^{N}|C^{\dagger}Z_{1}C|0^{N}\rangle
\end{equation}
satisfies $\mu(C)\geq\frac{1}{3}$ or $\mu(C)\leq-\frac{1}{3}$. 
This promise problem is $\mathsf{BQP}$-complete, since the acceptance probability of a quantum circuit is an affine function of the Pauli-$Z$ expectation value of its output qubit.

For every gate occurrence $G_{j}$, choose a distinct classically
representable input value $v_{j}$ and define
\begin{equation}
    H(v_{j})
    :=
    \frac{i}{\Delta t}\log G_{j},
\end{equation}
where the Hermitian logarithm is taken only on the constant-dimensional
support of $G_{j}$. Since every finite-dimensional unitary admits a
Hermitian logarithm, this definition ensures
\begin{equation}
    e^{-i\Delta t H(v_{j})}=G_{j}.
\end{equation}
Moreover, because $G_{j}$ acts on at most a constant number of qubits,
$H(v_{j})$ is also supported on at most the same constant number of
qubits. 
The target input sequence is then defined as
\begin{equation}
    \bm{u}^{\star}
    :=
    (v_{1},v_{2},\ldots,v_{L}).
\end{equation}
With the convention that the inputs are processed from left to right,
the corresponding reservoir evolution is
\begin{equation}
\begin{aligned}
    U(\bm{u}^{\star})
    &=
    e^{-i\Delta t H(v_{L})}
    \cdots
    e^{-i\Delta t H(v_{1})} \\
    &=
    G_{L}\cdots G_{1}
    =
    C.
\end{aligned}
\end{equation}
Consequently, the first component of the target feature vector is exactly
\begin{equation}
    \phi_{1}(\bm{u}^{\star})
    =
    \mu(C).
\end{equation}

It remains to construct a training set whose feature matrix forces the
ridge-regression readout to select the first feature. To this end, enlarge
the input alphabet by introducing an identity symbol $v_{\mathrm{id}}$
and, for every qubit $i\in[N]$, two calibration symbols
$v_i^{(Y)}$ and $v_i^{(X)}$ such that
\begin{equation}
    H(v_{\mathrm{id}})=0,
    \qquad
    H(v_i^{(Y)})=\frac{\pi}{4\Delta t}Y_i,
    \qquad
    H(v_i^{(X)})=\frac{\pi}{2\Delta t}X_i. \notag
\end{equation}
Accordingly,
\begin{equation}
    e^{-i\Delta t H(v_i^{(Y)})}|0\rangle_i
    =
    |+\rangle_i,
    \qquad
    e^{-i\Delta t H(v_i^{(X)})}|0\rangle_i
    =
    -i|1\rangle_i. \notag
\end{equation}

We take all input sequences to have the common length
\begin{equation}
    p=L+N.
\end{equation}
The target sequence is padded by identity symbols,
\begin{equation}
    \bm{u}^{\star}
    =
    (v_1,\ldots,v_L,
    \underbrace{v_{\mathrm{id}},\ldots,v_{\mathrm{id}}}_{N}),
\end{equation}
so that its induced evolution remains equal to the original circuit
$C$.

For each $r\in[N]$, define two training sequences
$\bm{u}^{(r,+)}$ and $\bm{u}^{(r,-)}$. Their first $L$ entries are all
$v_{\mathrm{id}}$, so none of the circuit gates is applied. Their final
$N$ entries are specified by
\begin{equation}
    u_{L+i}^{(r,+)}
    =
    \begin{cases}
        v_{\mathrm{id}}, & i=r,\\
        v_i^{(Y)}, & i\neq r,
    \end{cases}
\end{equation}
and
\begin{equation}
    u_{L+i}^{(r,-)}
    =
    \begin{cases}
        v_r^{(X)}, & i=r,\\
        v_i^{(Y)}, & i\neq r.
    \end{cases}
\end{equation}
Since these calibration unitaries act on distinct qubits, the resulting
states are
\begin{equation}
    U(\bm{u}^{(r,+)})|0^N\rangle
    =
    |0\rangle_r
    \bigotimes_{i\neq r}|+\rangle_i
\end{equation}
and
\begin{equation}
    U(\bm{u}^{(r,-)})|0^N\rangle
    =
    -i|1\rangle_r
    \bigotimes_{i\neq r}|+\rangle_i.
\end{equation}
Therefore,
\begin{equation}
    \bm{\phi}(\bm{u}^{(r,+)})
    =
    \bm{e}_r,
    \qquad
    \bm{\phi}(\bm{u}^{(r,-)})
    =
    -\bm{e}_r,
\end{equation}
where $\bm{e}_r$ denotes the $r$th standard basis vector of
$\mathbb{R}^N$.

Ordering the $2N$ training samples as
\begin{equation}
    \bm{u}^{(1,+)},\bm{u}^{(1,-)},\ldots,
    \bm{u}^{(N,+)},\bm{u}^{(N,-)},
\end{equation}
the resulting feature matrix is exactly
\begin{equation}
    F
    =
    \begin{pmatrix}
        \bm{e}_1^{\mathsf T}\\
        -\bm{e}_1^{\mathsf T}\\
        \vdots\\
        \bm{e}_N^{\mathsf T}\\
        -\bm{e}_N^{\mathsf T}
    \end{pmatrix},
\end{equation}
and hence
\begin{equation}
    F^{\mathsf T}F=2I_N.
\end{equation}
The construction uses only $2N$ training samples and $2N+1$
additional calibration symbols, and its classical description has size
polynomial in $N$.


Then, assign the corresponding labels according to
\begin{equation}
    y^{(r,+)}:=\delta_{r1},
    \qquad
    y^{(r,-)}:=-\delta_{r1}.
\end{equation}
It follows that
\begin{equation}
    F^{\mathsf T}\bm{y}
    =
    2\bm{e}_{1}.
\end{equation}
Fixing the ridge regularization parameter to $\lambda=2$ gives
\begin{equation}
\begin{aligned}
    \widehat{\bm{w}}
    &=
    \left(
        F^{\mathsf T}F+\lambda I_{N}
    \right)^{-1}
    F^{\mathsf T}\bm{y} \\
    &=
    \left(4I_{N}\right)^{-1}
    \left(2\bm{e}_{1}\right)
    =
    \frac{1}{2}\bm{e}_{1}.
\end{aligned}
\end{equation}
Therefore, the ridge prediction on the target sequence satisfies
\begin{equation}
\begin{aligned}
    \widehat{y}_{\star}
    &=
    \bm{\phi}(\bm{u}^{\star})^{\mathsf T}
    \widehat{\bm{w}} \\
    &=
    \frac{1}{2}\phi_{1}(\bm{u}^{\star})
    =
    \frac{1}{2}\mu(C).
\end{aligned}
\end{equation}
Hence,
\begin{align}
    \mu(C)\geq\frac{1}{3}
    \quad&\Longrightarrow\quad
    \widehat{y}_{\star}\geq\frac{1}{6}, \\
    \mu(C)\leq-\frac{1}{3}
    \quad&\Longrightarrow\quad
    \widehat{y}_{\star}\leq-\frac{1}{6}.
\end{align}
Thus, choosing the decision thresholds as
\begin{equation}
    a=\frac{1}{6},
    \qquad
    b=-\frac{1}{6}
\end{equation}
defines a promise-preserving reduction from quantum circuit acceptance
to the Hamiltonian-encoding QRC prediction problem.

The construction introduces only a polynomial number of input symbols,
Hamiltonian terms, training samples, and classically specified numerical
parameters, and can be carried out by a deterministic classical
algorithm in time polynomial in the description length of $C$.
Therefore,  the general Hamiltonian-encoding QRC prediction problem is $\mathsf{BQP}$-hard.

\end{document}